\documentclass[journal]{IEEEtran}
\ifCLASSINFOpdf
\else
\fi
\usepackage[utf8]{inputenc}
\usepackage{graphicx,float}
\usepackage[makeroom]{cancel}
\usepackage[dvipsnames]{xcolor}
\usepackage{amsmath,amsfonts,amssymb,mathtools,amsthm}
\usepackage{caption}
\usepackage{algorithm,algorithmic}
\usepackage{subcaption}
\usepackage[most]{tcolorbox}
\usepackage{booktabs}       
\usepackage{hyperref}
\usepackage{multirow}
\usepackage[figuresright]{rotating}

\newtheorem{theorem}{Theorem}
\usepackage{array}
\usepackage{comment}

\newcommand{\bGamma}{\mathbf{\Gamma}}
\newcommand{\bDelta}{\mathbf{\Delta}}
\newcommand{\bH}{\mathbf{H}}
\newcommand{\bQ}{\mathbf{Q}}
\newcommand{\bR}{\mathbf{R}}
\newcommand{\bC}{\mathbf{C}}
\newcommand{\bq}{\mathbf{q}}
\newcommand{\br}{\mathbf{r}}
\newcommand{\bS}{\mathbf{S}}

\newcommand{\bI}{\mathbf{I}}

\newcommand{\by}{\mathbf{y}}
\newcommand{\bx}{\mathbf{x}}

\newcommand{\bz}{\mathbf{z}}
\newcommand{\bzero}{\mathbf{0}}

\newcommand{\bphi}{\boldsymbol{\phi}}
\newcommand{\bmu}{\boldsymbol{\mu}}

\newcommand{\bff}{\mathbf{f}}

\newcommand{\bg}{\mathbf{g}}
\newcommand{\bs}{\mathbf{s}}

\newcommand{\bG}{\mathbf{G}}

\newcommand{\bV}{\mathbf{V}}

\newcommand{\vecc}{\mathbf{c}}
\newcommand{\bsigma}{\boldsymbol{\sigma}}
\newcommand{\bSigma}{\boldsymbol{\Sigma}}
\newcommand{\bLambda}{\boldsymbol{\Lambda}}
\newcommand{\bPhi}{\boldsymbol{\Phi}}
\DeclareMathOperator{\Tr}{Tr}

\definecolor{othercol}{RGB}{242, 243, 249}

\begin{document}

\title{{A Gaussian Sum Filter for Unifying Gaussian and Particle Filters}}

\author{Kostas~Tsampourakis,
and~Víctor~Elvira,~\IEEEmembership{Senior Member,~IEEE}%
\thanks{Kostas Tsampourakis is with the School of Mathematics at the University of Edinburgh, Edinburgh, EH9 3FD, U.K (e-mail: kostas.tsampourakis@ed.ac.uk).}%
\thanks{Víctor Elvira is with the School of Mathematics at the University of Edinburgh, Edinburgh, EH9 3FD, U.K (e-mail: victor.elvira@ed.ac.uk).}
\thanks{V.E. acknowledges support from ARL/ARO under grant W911NF-22-1-0235.}}

\maketitle

\begin{abstract}
State-space models (SSMs) are a broad class of probabilistic models for dynamical systems with many applications in engineering and science. Bayesian filtering is analytically tractable only in the linear–Gaussian setting, where the Kalman filter yields exact posterior distributions. For nonlinear or non-Gaussian SSMs, approximations are required. Two prominent families of approximate methods are Gaussian sum filters (GSFs), which rely on local Gaussian approximations and numerical integration schemes, and particle filters (PFs), which use sequential Monte Carlo sampling. Despite their success, GSFs can suffer from numerical instabilities and severe failures in strongly nonlinear regimes, while PFs are flexible and robust but often demand substantial computational resources to achieve accurate estimates. In this work, we propose the Augmented Gaussian Sum Filter (AGSF), a novel filtering framework that unifies GSFs and PFs through an augmented Gaussian approximation parameterized by latent variables and tunable covariance parameters. By adjusting these covariances, the AGSF interpolates continuously between GSF-like and PF-like behavior, recovering both as special cases. Building on this view, we develop an adaptive AGSF that automatically shifts its behavior according to the local nature of the nonlinearities, acting more like a GSF when Gaussian approximations are reliable and more like a PF when they are not. In a target-tracking application, we demonstrate that AGSF is efficient and robust to common failure modes of both GSFs and PFs. Finally, we empirically validate the switching behavior of the adaptive mechanism in a toy example. 

\end{abstract}

\begin{IEEEkeywords}
Bayesian filtering, state-space models, approximate inference
\end{IEEEkeywords}

%
\IEEEpeerreviewmaketitle

\section{Introduction}
%
%
%
%

State-space models (SSMs) provide a general probabilistic framework for modeling the dynamics of time-varying systems and their measurement processes concurrently. They introduce hidden states for the system of interest, which are accessed through a noisy observation process. SSMs find countless applications in disciplines of science and engineering such as neuroscience \cite{linderman2019hierarchical}, ecology \cite{buckland2004state}, and time-series analysis \cite{aoki2013state}, among many others.

A problem of interest is to sequentially infer the hidden states given a record of observations, a problem known as state estimation or filtering. In the Bayesian setting, this problem is equivalent to obtaining the state posteriors, also known as the filtering distributions \cite{sarkka2023bayesian}. The linear Gaussian SSM (LGSSM) is a SSM in which dynamics and observation processes are linear and all noises are Gaussian. In the LGSSM, the filtering distributions are analytically tractable and are obtained by the celebrated Kalman filter \cite{kalman1960new, kalman1961new}. 

When the models are nonlinear or the noises are non-Gaussian, the problem is intractable and numerical approximations are needed. One of the earliest solutions to this problem is the so-called \emph{extended Kalman filter}\footnote{It is historically interesting to note that the EKF was extensively used during the Apollo program. In fact EKF-enabled navigation was essential for guidance during Apollo 11’s lunar landing \url{https://www.nasa.gov/wp-content/uploads/2015/04/techbul_20-03-nav_filter_042920.pdf}} (EKF) which works by linearizing the model and makes a Gaussian assumption for the filtering distributions \cite{smith1962application}. The EKF belongs to a larger class of popular algorithms known as \emph{Gaussian filters} which also includes the \emph{unscented Kalman filter} (UKF) \cite{wan2000unscented}, the \emph{quadrature} and \emph{cubature} Kalman filters \cite{4266868, 855552, arasaratnam2009cubature}. Gaussian filters assume a Gaussian filtering distributions and use moment-matching approximations to obtain the sequence of posteriors \cite[Chapter 8]{sarkka2023bayesian}. In a family of algorithms known as \textit{Gaussian sum filters} (GSF) \cite{1100034}, a set of Gaussian filters is used in parallel to form a Gaussian mixture approximation of the filtering distribution. While GSFs is a very expressive class of algorithms, which in the limit of infinite components can converge to any probability density function \cite{SORENSON1971465}, they can perform quite poorly in practice \cite{7524032}. This is due to the properties of the linear approximations that underpin GSFs. Due to the local assumption of the linear approximations, the performance can be significantly hindered as the Gaussian covariances become larger or as the system becomes hight nonlinear, or underlying functions are  often leading to divergent numerical instabilities.

Particle filtering (PF) is an alternative approach which has been used widely for over two decades \cite{gordon1993novel, doucet2001sequential}. Particle filters are sequential Monte-Carlo methods \cite{chopin2020introduction} which approximate the filtering distribution using an ensemble of particles that is propagated and weighted using importance sampling \cite{elviraGMIS, advancesinIS}. They are a flexible class of algorithms which make few assumptions on the nature of the SSM and are relatively easy to implement \cite{doucet2009tutorial}. Moreover they enjoy strong theoretical convergence properties \cite{crisan2002survey}. This has earned them success in multiple applications such as robotics \cite{thrun2002particle}, finance \cite{lopes2011particle}, and tracking \cite{gustafsson2002particle} among others. While particle filters offer flexibility for filtering in nonlinear and non-Gaussian models, their main drawback is computational complexity because of the high number of particles needed due to the curse of dimensionality, which often makes them impractical \cite{daum2003curse}. 

Balancing accuracy and efficiency becomes a crucial consideration when employing Bayesian filters, since we are typically interested not only in state estimation but also in propagating uncertainty through the nonlinear dynamics and measurement updates. In this respect, GSFs and PFs sit at opposite ends of a continuum of approximations: GSFs represent uncertainty with a continuous parametric form, a Gaussian mixture whose components each carry location--scale information (means and covariances), whereas PFs represent uncertainty nonparametrically via a weighted ensemble of point-mass particles, so that distributional shape is encoded only collectively by the cloud of samples and their weights. While GSFs exploit the expressiveness of Gaussian mixtures, their accuracy depends on the nonlinearity of the model over the support of each component, with smaller component covariances generally reducing the impact of nonlinear effects \cite{1100034}. PFs, in contrast, make minimal assumptions and are asymptotically consistent as the number of particles increases, but in practice they can suffer from weight degeneracy, where only a few particles carry most of the weight, necessitating resampling and large particle counts. This degeneracy becomes dramatic as the state dimension grows, reflecting the curse of dimensionality and often making the use of PFs impractical \cite{daum2003curse}. These drawbacks motivate the search for methods that better balance the tradeoff between accuracy and robustness on the one hand and computational cost on the other.

\noindent \textbf{Related work:} The need to keep Gaussian component covariances small in order to have good performance of GSFs was recognized as early as \cite{1100034}, although no explicit mechanism was proposed to exploit it. Early works that propose such mechanisms are \cite{rauh2009nonlinear} and \cite{6104359}.

The mixture Kalman filter (also known as the Rao-Blackwellized particle filter) was introduced in \cite{chen2000mixture} as a sequential Monte Carlo method tailored to conditionally linear Gaussian models, where one samples latent indicator variables while marginalizing the conditionally linear Gaussian substructure with a Kalman filter, yielding substantial variance reduction relative to particle methods applied directly to the full state. This work although conceptually similar, is effectively designed to solve a different problem, namely that of filtering when the hidden state possesses both continuous and discrete components. 

In the signal processing literature, a complementary construction has been developped \cite{kotecha2003gaussian}. Starting from a Gaussian particle filter \cite{1232326} that maintains a single Gaussian approximation but updates its moments by propagating particles through the nonlinear dynamics/measurement maps (avoiding linearization), they build Gaussian sum particle filters as bank of such Gaussian particle filters to obtain a weighted Gaussian sum approximation of the predictive and filtering distributions; they also describe an extension to additive non-Gaussian noise by approximating the noise as a Gaussian mixture, yielding a bank of Gaussian-noise models. 

A more application-driven line of work, in aerospace, tracking, and robotics, has focused on actively controlling component covariances through adaptive splitting/merging. In particular, an entropy-based nonlinearity detection to trigger component splitting during uncertainty propagation in a Gaussian-mixture framework is proposed \cite{demars2013entropy}.  In \cite{faubel2009split}, a split-and-merge unscented Gaussian sum filter with an explicit mechanism intended to avoid unnecessary splitting in locally linear regions is presented. In robotics-oriented tracking, \cite{havlak2013discrete} discusses a sigma-point-based splitting approach motivated by nonlinear dynamics. For nonlinear measurement updates, \cite{tuggle2018automated} develop an automated splitting procedure targeted specifically at the update step. Finally, recent efforts aim to impose explicit covariance bounds during mixture re-approximation via linear-matrix-inequality inequalities, trading additional optimization complexity for direct control of mixand spread \cite{psiaki2015gaussian, 7524032}.

\noindent \textbf{Contributions:} 
In this work,\footnote{A limited version of this work was presented by the authors in the conference paper \cite{10095899}. {Compared to the conference paper, here we derive the algorithm for a general Gaussian moment matching approximation. We derive a criterion for specifying the augmentation covariance, and we prove that the AGSF algorithm interpolates between the GSF and the BPF. We have introduced an adaptive version of the AGSF. We expanded the experiments, showing improvement to both the GSF and BPF, as well as showcasing the adaptive capabilities of the AGSF.}} we introduce the \textit{augmented Gaussian sum filter} (AGSF), a new framework which addresses the drawbacks of GSFs and PFs while combining features of both by taking an innovative approach away from existing strategies. The key idea is to use a Gaussian convolution identity to express a single Gaussian component as a mixture of narrower components whose covariances can be set directly as free parameters, in a process that we call \emph{augmentation}. By adjusting these augmentation covariances, the AGSF continuously trades off local-approximation bias (which grows when a component becomes too diffuse under strong nonlinearities) against Monte Carlo variance (which dominates when the component is represented purely by point particles).

This approach differs from existing covariance-control strategies in the literature, which typically enforce narrowness indirectly via split/merge or heuristic nonlinearity detectors, and from covariance-bounding schemes that rely on solving linear matrix inequality (LMI) programs to impose explicit upper bounds on the covariance \cite{7524032, psiaki2015gaussian}. In our approach, the Gaussian splitting is a natural outcome of the augmentation, which easily ensures the positive definiteness of resulting covariances without having to impose complex inequalities, while allowing for maximal flexibility in the selection of component covariances. It also differs from particle–Gaussian hybridizations of \cite{chen2000mixture} and \cite{crisan2015generalised} as well as the Gaussian sum particle filter of \cite{kotecha2003gaussian}. These approaches blend features of Gaussian and particle filters in a very different way to the AGSF, which interpolates between GSFs and PFs that are shown to be special cases within the framework. 

The main contributions are summarized as follows:
\begin{itemize}
    \item {We propose a novel splitting scheme by reinterpreting the well known Gaussian convolution identity. The scheme allows us to express a Gaussian distribution as a Gaussian mixture with smaller components, and do so in an efficient and flexible manner.}
    \item We propose a novel class of filtering algorithms called augmented Gaussian sum filters (AGSF) which unifies the classes of Gaussian and particle filters. The AGSF uses a novel augmented Gaussian approximation method based on a Gaussian integral identity.
    \item We derive an adaptive version of the AGSF which can behave like a particle filter, a Gaussian filter, or something in between, according to the local features of the nonlinearities. The adaptive version uses a novel optimization problem which is used to automatically select the augmentation covariance. 
    \item We present simulation results showing the robustness of the AGSF compared to the bootstrap particle filter (BPF) and GSFs. Our results support the view of that the AGSF as interpolate between GSFs and PFs. We also demonstrate the adaptive capability of the AGSF in a model with mixed linear-nonlinear regimes.
\end{itemize}

The rest of the paper is structured as follows. Section \ref{sec 2} summarizes Bayesian filtering for state-space models and the basic GSF and BPF algorithms. In Section \ref{sec 3}, we introduce the basic Gaussian augmentation and the augmented Gaussian approximations that it underpins. In Section \ref{sec 3.D}, we introduce the optimization problem that is used to select the covariances automatically. In Section \ref{sec 4}, we derive the novel AGSF algorithm and prove that it unifies the GSF and the BPF. Finally in Section \ref{sec 5}, we present our experimental results comparing the AGSF to GSF and BPF.

\section{Background}\label{sec 2}
{
\subsection{Notation}
We denote by $\mathcal{N}(\cdot|\bmu, \bSigma)$ the multivariate Gaussian pdf with mean $\bmu$ and covariance $\bSigma$. For any function $\bphi(\bx)$, the notation $\mathcal{I}[\cdot ; \bz, \bDelta]$ denotes the Gaussian integral
\begin{equation*}
    \mathcal{I}[\bphi(\bx) ; \bz, \bDelta] = \int \bphi(\bx) \mathcal{N}(\bx | \bz, \bDelta) d\bx.
\end{equation*}
We denote by $\bz_{mn}$ and $\bs_{mnl}$ the auxiliary variables introduced in the prediction and update steps of the AGSF algorithm, respectively. We also denote by  $\bDelta_m^{(t)}$ and $\bLambda_{mn}^{(t)}$ the augmentation covariances introduced at time $t$, for the prediction and update steps of the AGSF, respectively. Finally, we denote $a\wedge b = \min(a,b)$.
}

\subsection{Bayesian Filtering}
Consider an additive, nonlinear state-space model (SSM), defined by
\begin{align}\label{ssm1}
    \bx_t & = \bff(\bx_{t-1}) + \bq_t, \\ \label{ssm2}
    \by_t & = \bg(\bx_{t}) + \br_t,
\end{align}
where $\bx_t\in \mathbb{R}^{d_x}$ and $ \by_t\in \mathbb{R}^{d_y}$ are the state and observation vectors, $\bff$ and $\bg$ are the nonlinear dynamics and observation functions, while $\bq_t \sim \mathcal{N}(\bzero, \bQ) $ and $\br_t \sim \mathcal{N}(\bzero, \bR)$ are additive noise vectors. The goal of \textit{Bayesian filtering} is to compute the marginal posterior distribution, called \textit{filtering distribution} $p({\bx_t}|\by_{1:t})$ of the state $\bx_t$, given $\by_{1:t} \equiv \{\by_1, \dots, \by_t \}$. The computation can be done recursively, and the filtering distribution at time $t$ can be computed in two steps, given the one at time $t-1$ and the new observation $\by_t$. These steps are (i) the computation of the predictive distribution of the state $\bx_t$ given observations up to time $t-1$ (\textit{prediction step}),
\begin{equation}\label{pred}
    p(\bx_t|\by_{1:t-1}) = \int p(\bx_t|\bx_{t-1})p(\bx_{t-1}|\by_{1:t-1})d\bx_{t-1},
\end{equation}
and (ii) the Bayesian update of this distribution upon receiving the new observation at time $t$ (\textit{update step}):
\begin{equation}\label{upd}
    p(\bx_t|\by_{1:t}) = \frac{p(\by_t|\bx_{t})p(\bx_{t}|\by_{1:t-1})}{\int p(\by_t|\bx_{t})p(\bx_{t}|\by_{1:t-1})d\bx_t}.
\end{equation}
The prediction and update steps can be computed exactly for a linear Gaussian SSM. When the functions are nonlinear, we define approximations of the filtering distribution and compute approximately the prediction and update steps of Eqs. \eqref{pred}-\eqref{upd}.

\subsection{Gaussian moment matching}\label{sec:2.B}
Gaussian moment matching approximations provide a general way for approximating the joint distribution of a pair of random variables by a Gaussian density, {and have been used in various Bayesian filters \cite[Chapter 8]{sarkka2023bayesian}).} Below, we briefly review these approximations for the case of an additive transform. 

\noindent\textbf{Gaussian moment matching of an additive transform.} For the pair of random variables $\bx\in\mathbb{R}^{d}$ and $\by\in\mathbb{R}^{d'}$ defined by
\begin{align} \label{system 1}
    \bx &\sim \mathcal{N}(\cdot|\bmu, \bSigma), \\ \label{system 2}
    \by &= \bff(\bx) + \br,
\end{align}
where $\bff:\mathbb{R}^{d}\rightarrow\mathbb{R}^{d'}$ and $\br\sim \mathcal{N}(\bzero, \bR)$, the exact joint pdf is given by
\begin{align}\label{exact_joint}
    p(\bx, \by) = \mathcal{N}(\by|\bff(\bx), \bR) \mathcal{N}(\bx | \bmu, \bSigma) ,
\end{align}
which is a non-Gaussian density due to the nonlinearity $\bff$. The Gaussian moment matching approximation of the joint is
 \begin{equation}\label{eq:joint gaussian}
    p(\bx, \by ) \approx \mathcal{N}\Bigg( \begin{pmatrix} \bx \\ \by \end{pmatrix} \Big| \begin{pmatrix} \bmu \\   \bmu_{\by} \end{pmatrix}, \begin{pmatrix} \bSigma &   \bC_{\bx\by} \\ \bC_{\bx\by}^T &   \bS_{\by} \end{pmatrix} \Bigg),
\end{equation}
with,
\begin{align} \label{GMM1}
    \bmu_{\by} &= \int \bff(\bx) \mathcal{N}(\bx|\bmu, \bSigma) d\bx, \\ \label{GMM2}
    \bS_{\by}  &= \int (\bff(\bx) - \bmu_{\by})(\bff(\bx) - \bmu_{\by})^T \mathcal{N}(\bx|\bmu, \bSigma) d\bx +\bR, \\ \label{GMM3}
    \bC_{\bx\by} &= \int (\bx - \bmu)(\bff(\bx) - \bmu_{\by})^T \mathcal{N}(\bx|\bmu, \bSigma) d\bx.
\end{align}
Since the integrals \eqref{GMM1}-\eqref{GMM3} are in general intractable, they have to be approximated using some numerical scheme. Note that the approximation is used twice at every iteration of Bayesian filtering: once in the prediction step where we replace $(\bx, \by)$ by $ (\bx_{t-1}, \bx_t)$, and once in the update step where we replace $(\bx, \by)$ by $(\bx_{t}, \by_t)$.

Two of the most popular ways to approximate the moments is to use a Taylor expansion of the nonlinear function $\bff$, leading to the \textit{linear approximation of an additive transform} \cite[Algorithm 7.1]{sarkka2023bayesian}, or the unscented transform, leading to the \textit{unscented approximation of an additive transform} \cite[Algorithm 8.15]{sarkka2023bayesian}. We review these approximations below. 

\noindent\textbf{Linear approximation of an additive transform:} It consists of a linear Taylor expansion of $\bff$ around $\bmu$. Substituting $\bff(\bx)$ in Eq. \eqref{system 2} by its first-order expansion, we obtain
\begin{equation}\label{Taylor moment-matching}
    \by\simeq\bff(\bmu) + \nabla\bff(\bmu)(\bx-\bmu) + \br,
\end{equation}
 which is an affine transformation of $\bx$. Hence, the joint is approximated as in Eq. \eqref{eq:joint gaussian} with
 \begin{align}
     \bmu_{\by} &= \bff(\bmu), \\
     \bS_{\by} &= \nabla \bff(\bmu) \bSigma \nabla \bff(\bmu)^T + \bR, \\
     \bC_{\bx\by} &= \bSigma \nabla \bff(\bmu)^T.
 \end{align} 
A geometric illustration of the linear approximation can be found in Fig. \ref{fig:EKF update}.

\noindent\textbf{Unscented approximation of an additive transform.} Another way to approximate $p(\bx, \by)$ with a Gaussian uses the so-called \textit{unscented transform} (UT) \cite{Julier96ageneral}. The UT deterministically chooses a set of \textit{sigma-points} that captures the first and second order moments of the distribution of $\bx$. The sigma-points are then propagated through the nonlinearity $\bff$, and the resulting points are used to reconstruct the joint Gaussian \cite[Algorithm 8.15]{sarkka2023bayesian}.

Formally, $2d_x+1$ sigma-points are selected as follows,
\begin{align}
    \bsigma^{(0)} &= \bmu, \\
    \bsigma^{(\pm i)} &= \bmu \pm \sqrt{d_x+\lambda} \cdot \bSigma^{1/2}_{\bullet i}, \ i=1,\dots,d_x,
\end{align}
where $\{ \bsigma^{(i)}\}_{i=-d_x}^{d_x}$ are the sigma points,  $\bSigma^{1/2}_{\bullet i}$ is the $i^{th}$ column of the matrix square-root of the covariance $\bSigma$ and $\lambda = \alpha^2(d_x+\kappa) - d_x$, where $\alpha$ and $\kappa$ are parameters that determine the spread of the sigma-points around the mean \cite{wan2001unscented}. The sigma-points are used to construct estimates of the moments of \eqref{exact_joint} as follows:
\begin{align}\label{unscented mean}
    \bmu_{\by}^{\sigma} &= \sum_{i=-d_x}^{d_x} \omega_i \bff(\bsigma^{(i)}), \\ \label{unscented cov}
    \bS_{\by}^{\sigma} &= \sum_{i=-d_x}^{d_x} \tilde \omega_i (\bff(\bsigma^{(i)})- \bmu_{\by}^{\sigma})(\bff(\bsigma^{(i)})- \bmu_{\by}^{\sigma})^T + \bR, \\ \label{unscented cor}
    \bC_{\bx\by}^{\sigma} &= \sum_{i=-d_x}^{d_x} \tilde \omega_i (\bsigma^{(i)}-\bmu)(\bff(\bsigma^{(i)})- \bmu_{\by}^{\sigma})^T,
\end{align}
where
\begin{align}
    \omega_0 &= \frac{\lambda}{d_x+\lambda}, \ \tilde \omega_0 = \frac{\lambda}{d_x+\lambda} + (1-\alpha^2+\beta), \\
    \omega_i &= \tilde \omega_i = \frac{1}{2(d_x+\lambda)}, \ i=\pm1,\dots,\pm d_x,
\end{align}
and where $\beta$ is an additional parameter that can be used to incorporate prior information on the distribution of $\bx$ \cite{wan2001unscented}. Finally, we use the estimates of Eqs. \eqref{unscented mean}-\eqref{unscented cor} to build the Gaussian approximation
 \begin{equation}\label{eq:joint unscented gaussian}
    p(\bx, \by ) \approx \mathcal{N}\Bigg( \begin{pmatrix} \bx \\   \by \end{pmatrix} \Big| \begin{pmatrix} \bmu \\   \bmu_{\by}^{\sigma} \end{pmatrix}, \begin{pmatrix} \bSigma &   \bC_{\bx\by}^{\sigma} \\ (\bC_{\bx\by}^{\sigma})^{T} &   \bS_{\by}^{\sigma} \end{pmatrix} \Bigg).
\end{equation}

\noindent\textbf{Transform for non-additive models.} The Gaussian approximations that we have described can be applied to non-additive transformations of the form
\begin{align}
    \by &= \bff(\bx, \br).
\end{align}
This generalization is straightforward and involves treating the variable $\tilde\bx = (\bx, \br)$ jointly, as the new state variable. For a detailed derivation, see Algorithms 8.2, 7.2 and 8.16 in  \cite{sarkka2023bayesian}, for the generic Gaussian moment-matching approximation, the linear approximation, and the unscented approximation, respectively.

\subsection{Gaussian and Gaussian sum filters}
Gaussian filters use Gaussian moment matching to approximate the joint distribution of $(\bx_{t-1},\bx_t)$ during the prediction step and the joint of $(\bx_{t},\by_t)$ during the update step  \cite{maybeck1982stochastic}. Using different moment matching approximations results in different Gaussian filters, e.g., the \textit{extended Kalman filter} (EKF) which uses the \textit{linear approximation to an additive transform} and the \textit{unscented Kalman filter} (UKF) which uses the \textit{unscented transform}. Other examples include the cubature Kalman filter \cite{arasaratnam2009cubature}, the Gaussian particle filter \cite{1232326}, and Gauss-Hermite Kalman filter \cite{855552}. Gaussian sum filters run in parallel multiple Gaussian filters , such as EKFs or UKFs \cite{SORENSON1971465}.

\subsection{Bootstrap particle filter}\label{sec 2.D}
Particle filters (PFs) are sequential Monte-Carlo methods \cite{doucet2001sequential,chopin2020introduction} that make no assumption about the form of the filtering distribution, and instead use a set of weighted samples (or particles) to represent the filtering distribution. At each time step, they propose a new set of samples using a proposal density and then use importance sampling \cite{robert1999monte} to assign weights to the samples according to the observed data \cite{gordon1993novel, kitagawa1996monte}.

The \textit{bootstrap particle filter} (BPF) \cite[Algorithm 11.9]{sarkka2023bayesian} uses the dynamical model transition density $p(\bx_t|\bx_{t-1})$ as a proposal leading to importance weights proportional to the likelihood. The bootstrap filter also typically performs resampling to avoid particle degeneracy, when an effective sample size (ESS) \cite{elvira2022rethinking} is below a threshold \cite{doucet2009tutorial}. 

\section{Augmentation-based Approximations}\label{sec 3}
 
In this section, we propose a novel \textit{augmented Gaussian approximation} which generalizes the Gaussian moment-matching approximations. The novel procedure is based on a well known Gaussian identity and can be used to make a Gaussian mixture approximation for the joint of a pair of random variables. We derive augmented analogs of the linear and unscented approximations. In Section \ref{sec 4}, we will use the \textit{augmented Gaussian approximation} to derive a novel class of filtering algorithms called \textit{augmented Gaussian sum filters} (AGSF).

\subsection{Augmentation of a Gaussian integral}\label{sec 3.A}
{Here we introduce the augmentation of a Gaussian integral and {use it to propose a novel Gaussian splitting scheme} that is used in the prediction and update steps of the our novel AGSF algorithm.} The augmentation is based on the Gaussian integral identity,
\begin{equation}\label{convolution}
        \mathcal{N} (\bx|\bmu, \bSigma)  = \int \mathcal{N}(\bx|\bGamma \bz  + \vecc, \bDelta)  \mathcal{N} (\bz|\bmu_{\bz}, \bSigma_{\bz})  d\bz,
\end{equation}
 where $\bx \in\mathbb{R}^{d_{x}}$ and $\bz\in\mathbb{R}^{d_z}$. We make a probabilistic interpretation of this identity, treating $\bz$ as a latent random variable on which the variable of interest $\bx$ depends. Eq. \eqref{convolution} requires the following relations to hold:
 \begin{align} \label{augm params 1}
     \bmu &= \bGamma \bmu_{\bz} + \vecc, \\ \label{augm params 2}
     \bSigma &= \bGamma \bSigma_{\bz} \bGamma^T + \bDelta, \\ \label{augm params 3}
     \bDelta &\succeq \bzero, \bSigma_{\bz} \succeq \bzero.
 \end{align}
 Without loss of generality, for the rest of this paper we make the choices $ d_z = d_x, \vecc = \bzero, \bmu_{\bz} = \bmu, \bGamma= \bI$, and $ \bSigma_{\bz} = \bSigma - \bDelta$. We discuss this choice at the end of this section. With this choice the only free parameter to choose is the covariance matrix $\bDelta$, which must satisfy $\bSigma \succeq \bDelta \succeq \bzero$. 

\noindent  {\textbf{Augmentation-based splitting:} We use the augmentation to propose a novel Gaussian splitting scheme}, i.e., to make a Gaussian mixture approximation of $\mathcal{N}(\bx|\bmu, \bSigma)$. We do this by sampling values of the auxiliary variable $\bz$, that is, we approximate the right-hand side of Eq. \eqref{convolution} using Monte Carlo integration as
 
 \begin{equation} \label{A-KDE}
    \mathcal{N} (\bx|\bmu, \bSigma) \simeq \frac{1}{N} \sum_{n=1}^N \mathcal{N}(\bx | \bz_n, \bDelta),
 \end{equation}

 \noindent where $\bz_n \sim \mathcal{N}(\bmu, \bSigma-\bDelta)$. Note that for $\bDelta \to \bzero$, the approximation of Eq. \eqref{A-KDE} is a particle approximation of the original Gaussian, whereas it is easy to verify that for $\bDelta = {\bSigma}   $ we recover the initial Gaussian. Intermediate values of $\bDelta$ can be interpreted as kernel density estimators of \eqref{convolution}.
 
 In the next section, Eq. \eqref{A-KDE} will be used to split the components of the filtering and predictive distributions in the prediction and update steps of the AGSF, respectively.

\subsection{Augmented Gaussian approximations}\label{sec 3.B}
We now combine the augmentation of the Section \ref{sec 3.A} with the Gaussian moment matching approximation to approximate the joint distribution of $(\bx, \by)$ in Eqs. \eqref{system 1}-\eqref{system 2}.  We use the mixture approximation of Eq. \eqref{A-KDE} together with Gaussian moment-matching, to construct a Gaussian mixture approximation of the joint in Eq. \eqref{exact_joint}. We have
\begin{align}
    p(\bx, \by) & = \mathcal{N}(\by|\bff(\bx), \bR) \mathcal{N}(\bx | \bmu, \bSigma) \\ \label{augm-approx}
    &\simeq \mathcal{N}(\by|\bff(\bx), \bR) \frac{1}{N} \sum_{n=1}^N \mathcal{N}(\bx | \bz_n, \bDelta) \\ \label{inter mix}
    &= \frac{1}{N} \sum_{n=1}^N \mathcal{N}(\by|\bff(\bx), \bR) \mathcal{N}(\bx | \bz_n, \bDelta) \\ \label{mixture joint approx}
    & \simeq \frac{1}{N} \sum_{n=1}^N \mathcal{N}\Bigg( \begin{pmatrix} \bx \\   \by \end{pmatrix} \Bigg| \begin{pmatrix} \bz_n \\   \bmu_{\by, n} \end{pmatrix}, \begin{pmatrix} \bDelta &   \bC_{\bx\by,n} \\ \bC_{\bx\by,n}^T &   \bS_{\by, n} \end{pmatrix} \Bigg),
\end{align}
where $\bz_n \sim \mathcal{N}(\bmu, \bSigma - \bDelta)$, and
\begin{align} \label{cond y mean}
    \bmu_{\by, n} &= \int \bff(\bx) \mathcal{N}(\bx | \bz_n, \bDelta) d\bx,\\ \label{cond y cov}
    \bS_{\by, n}  &= \int (\bff(\bx) - \bmu_{\by})(\bff(\bx) - \bmu_{\by})^T \mathcal{N}(\bx|\bz_n, \bDelta) d\bx +\bR, \\ \label{cond cross-cov}
    \bC_{\bx\by,n} &= \int (\bx - \bz_n)(\bff(\bx) - \bmu_{\by})^T \mathcal{N}(\bx|\bz_n, \bDelta) d\bx.
\end{align}

\noindent In Eq. \eqref{augm-approx}, we use the mixture approximation of Eq. \eqref{A-KDE}, whereas in Eq. \eqref{mixture joint approx} we use for each mixture component of Eq. \eqref{inter mix} the \textit{Gaussian moment matching of an additive transform} from Section \ref{sec:2.B}, also \cite[Algorithm 8.1]{sarkka2023bayesian}. To generalize to a non-additive transform we use \cite[Algorithm 8.2]{sarkka2023bayesian} instead. Below, we introduce the augmented analogs of the linear and unscented Gaussian moment-matching approximations by approximating the moments of Eqs. \eqref{cond y mean}-\eqref{cond cross-cov} accordingly.

\subsubsection{Augmented linear approximation} \label{sec 3.B.1}
By applying the \textit{linear approximation of an additive transform} to approximate the moments in Eqs. \eqref{cond y mean}-\eqref{cond cross-cov}, we obtain
\begin{align} \label{L cond y mean}
    \bmu_{\by, n} &= \bff(\bz_n),\\ \label{L cond y cov}
    \bS_{\by,n} &= \nabla\bff(\bz_n)\bDelta \nabla\bff(\bz_n)^T + \bR, \\ \label{L cond cross-cov}
    \bC_{\bx\by,n} &= \bDelta \nabla \bff(\bz_n)^T,
\end{align}

\noindent where $\nabla\bff$ denotes the Jacobian of $\bff$. An illustration of the joint approximation coming from the augmented linear approximation is shown in Fig. \ref{fig:AGSF recursion}. Using the non-additive version of the linear approximation \cite[Algorithm 7.2]{sarkka2023bayesian}, we can generalize the augmented linear approximation to nonlinear models.

\subsubsection{Augmented unscented approximation}\label{augm unscented approx}
Similarly, we can use the unscented approximation to approximate the Eqs. \eqref{cond y mean}-\eqref{cond cross-cov}. In this way, we obtain
\begin{align}\label{aug unscented mean}
    \bmu_{\by,n}^{\sigma} &= \sum_{i=-d_x}^{d_x} \omega_i \bff(\bsigma_{n}^{(i)}), \\ \label{aug unscented cov}
    \bS_{\by,n}^{\sigma} &= \sum_{i=-d_x}^{d_x} \tilde \omega_i (\bff(\bsigma_{n}^{(i)})- \bmu_{\by,n}^{\sigma})(\bff(\bsigma_{n}^{(i)})- \bmu_{\by,n}^{\sigma})^T + \bQ, \\ \label{aug unscented cor}
    \bC_{\bx\by,n}^{\sigma} &= \sum_{i=-d_x}^{d_x} \tilde \omega_i (\bsigma_{n}^{(i)}-\bz_n)(\bff(\bsigma_{n}^{(i)})- \bmu_{\by,n}^{\sigma})^T,
\end{align}

where $\{\bsigma_{n}^{(i)}\}_{i=-d_x}^{d_x}$ denote the sigma-points used to reconstruct $\mathcal{N}(\bz_n, \bDelta)$. Using the non-additive version of the unscented approximation \cite[Algorithm 8.15]{sarkka2023bayesian}, we can generalize the augmented linear approximation to nonadditive models.

\begin{figure}
     \centering
     \begin{subfigure}[b]{0.93\linewidth}
         \centering
         \includegraphics[width=\linewidth]{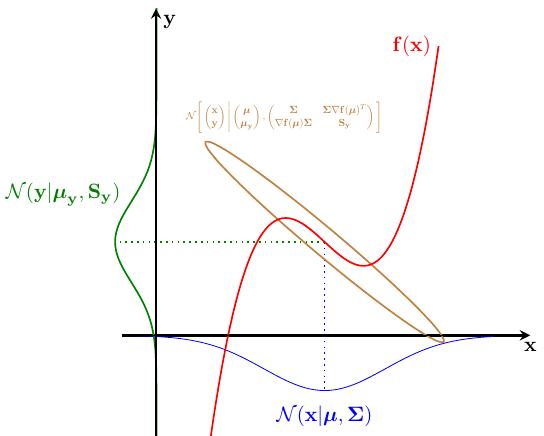}
         \caption{}
         \label{fig:EKF update}
     \end{subfigure}
     \hfill
     \begin{subfigure}[b]{\linewidth}
         \centering
         \includegraphics[width=\linewidth]{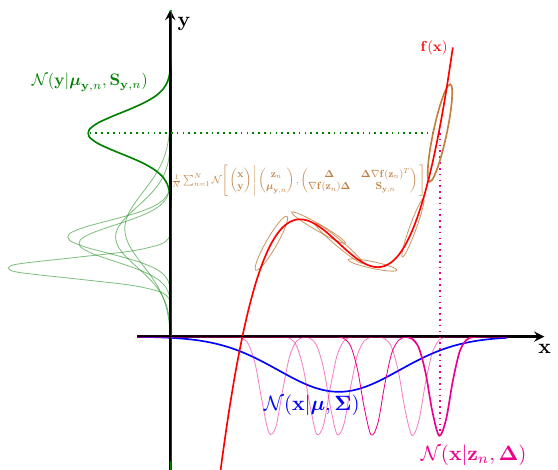}
         \caption{}
         \label{fig:AGSF recursion}
     \end{subfigure}
        \caption{(a) Geometric illustration of the linear approximation of an additive transform. The prior Gaussian distribution for $\bx$ (blue) is propagated through a linear approximation to yield a joint Gaussian approximation (brown) and the marginal of $\by$ (green). (b) Geometric illustration for  the augmented linear approximation. The prior Gaussian (blue), is split into Gaussian components (magenta) which are used for the joint approximation (brown) and the approximation of the marginal of $\by$ (green).}
\end{figure}

\subsection{Selection of parameters}\label{sec 3.C}

We now address the choice of parameters introduced in Section \ref{sec 3.A}. Note that the identity of Eq. \eqref{convolution} holds for any choice of the parameters, $d_z, \vecc, \bGamma, \bDelta, \bmu_{\bz},$ and $\bSigma_{\bz}$ 
that satisfy the constraints \eqref{augm params 1}-\eqref{augm params 3}. However, this set is overparametrized, since if we allow for degenerate 
covariance matrices $\bDelta$ and $\bSigma_{\bz}$, we may restrict $\bz$ to a lower dimensional subspace of $\mathbb{R}^{d_z}$ effectively controlling the dimensionality $d_z$. Moreover, we can absorb $\bGamma$ into $\bSigma_{\bz}$ and $\vecc$ into $\bmu_{\bz}$. Thus, without loss of generality, we make the convenient choices $d_z = d_x, \vecc = \bzero, \bGamma= \bI$. Finally, to satisfy Eqs.\eqref{augm params 1}-\eqref{augm params 3}, we must have $\bmu_{\bz} = \bmu$ and $\bSigma_{\bz} = \bSigma - \bDelta$.

With these choices, the only free parameter of the augmentation is the covariance matrix $\bDelta$. As mentioned in Sec. \ref{sec 3.A}, this matrix allows us to obtain a family of approximations that has as extreme cases a particle approximation (with $\bDelta \to\bzero$), and the original Gaussian (with $\bDelta=\bSigma$). Moreover, in the context of the \textit{augmented Gaussian approximations}, we are able to tune between Monte Carlo approximation ($\bDelta\to\bzero$) and linear or unscented approximations (with $\bDelta=\bSigma$). All choices in our framework that lie in-between these two extremes lead to novel algorithms. 

\subsection{Automatic selection of $\bDelta$}\label{sec 3.D}
In this section, we motivate a procedure for determining the augmentation covariance $\bDelta$ automatically via a semidefinite program. We derive the program in the context of the \textit{augmented linear approximation} (Section \ref{sec 3.B.1}), which allows tractable analytical computations. 

Note that the estimator of variable $\by$ defined in Eq. \eqref{system 2}, is the sample average of the moments given in Eq. \eqref{L cond y mean} and is given by
\begin{align}\label{mean estimator}
    \widehat \bmu_{\by} = \frac{1}{N} \sum_{n=1}^N \bff(\bz_n),
\end{align}
where $\bz_n \sim \mathcal{N}(\bmu, \bSigma - \bDelta)$. The following theorem expresses the MSE of the estimator in Eq. \eqref{mean estimator} as a function of the covariance parameters $\bSigma$ and $\bDelta$, the number of Monte-Carlo samples $N$, the Jacobian, and the Hessian of the nonlinearity $\bff$. It will be exploited later to determine the augmentation covariances automatically.
\begin{theorem}\label{MSE theorem}
Let $\bff:\mathbb{R}^{d_x} \rightarrow \mathbb{R}^{d_y}$ be twice continuously differentiable. Then the MSE of the estimator $\widehat \bmu_{\by}$ defined in Eq. \eqref{mean estimator} is given by
\begin{equation} \label{MSE bound}
    \begin{split}
         MSE(\widehat \bmu_\by) &\simeq \frac{1}{N} \Tr\big((\bSigma-\bDelta)\nabla\bff(\bmu)^T\nabla\bff(\bmu)\big) \\
          &+ \frac{1}{4} \sum_{i=1}^{d_y} \Tr\big( \bDelta \nabla^2\bff_i(\bmu)^2\big),
    \end{split}
\end{equation}
by ignoring terms of higher order in $\bDelta$.
\end{theorem}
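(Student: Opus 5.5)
The plan is a bias--variance decomposition of $MSE(\widehat\bmu_\by)=\mathbb{E}\|\widehat\bmu_\by-\bmu_\by\|^2$, where the target is $\bmu_\by=\mathcal{I}[\bff(\bx);\bmu,\bSigma]$ from Eq. \eqref{GMM1}. Since the $\bz_n$ are i.i.d. draws from $\mathcal{N}(\bmu,\bSigma-\bDelta)$, we have
\begin{equation*}
MSE(\widehat\bmu_\by)=\frac{1}{N}\Tr\,\mathrm{Cov}[\bff(\bz)]+\big\|\mathbb{E}[\bff(\bz)]-\bmu_\by\big\|^2 ,
\end{equation*}
and each term will be approximated by Taylor expanding $\bff$ around $\bmu$, using that $\bff$ is $C^2$.

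\textbf{Variance term.} I linearize, $\bff(\bz)\approx\bff(\bmu)+\nabla\bff(\bmu)(\bz-\bmu)$. This gives $\mathrm{Cov}[\bff(\bz)]\approx\nabla\bff(\bmu)(\bSigma-\bDelta)\nabla\bff(\bmu)^T$. Cyclicity of the trace then yields the first term $\frac1N\Tr\big((\bSigma-\bDelta)\nabla\bff(\bmu)^T\nabla\bff(\bmu)\big)$. The Hessian contributions to the variance are of order $1/N$ times fourth moments, so I drop them; I will state this explicitly as part of the ``$\simeq$''.

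\textbf{Bias term.} I expand each coordinate $\bff_i$ to second order around $\bmu$. Write $\bH_i=\nabla^2\bff_i(\bmu)$. Because both $\bz$ and $\bx$ have mean $\bmu$, the linear terms vanish and
\begin{equation*}
\mathbb{E}[\bff_i(\bz)]-\mathcal{I}[\bff_i(\bx);\bmu,\bSigma]\approx\tfrac12\Tr\big(\bH_i(\bSigma-\bDelta)\big)-\tfrac12\Tr\big(\bH_i\bSigma\big)=-\tfrac12\Tr(\bH_i\bDelta).
\end{equation*}
Equivalently, the bias is inherited componentwise. Each linearized component mean $\bff(\bz_n)$ from Eq. \eqref{L cond y mean} differs from the exact component mean of Eq. \eqref{cond y mean} by $\tfrac12\Tr(\bH_i\bDelta)$ per coordinate. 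Summing the squares over $i=1,\dots,d_y$ gives a bias contribution of $\frac14\sum_i(\ldots)^2$, which is quadratic in $\bDelta$ and involves only the Hessians. Higher-order remainders, which come from third derivatives times higher moments of $\bDelta$ and $\bSigma-\bDelta$, are discarded as in the statement.

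\textbf{Main obstacle.} The hard step is reconciling the squared bias with the stated form $\frac14\sum_i\Tr(\bDelta\bH_i^2)$. The naive calculation gives $(\Tr(\bH_i\bDelta))^2$, which is not equal to $\Tr(\bDelta\bH_i^2)$ in general. I would first check whether the intended error is the pointwise mean-square linearization error within each component. That error is $\mathbb{E}_{\bx\sim\mathcal{N}(\bz_n,\bDelta)}\|\bff(\bx)-\bff(\bz_n)-\nabla\bff(\bz_n)(\bx-\bz_n)\|^2$, and evaluating it with Isserlis' theorem gives $\frac14\sum_i\big[(\Tr\bH_i\bDelta)^2+2\Tr((\bH_i\bDelta)^2)\big]$. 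To leading order it is controlled by expressions of the type $\Tr(\bDelta\bH_i\bDelta\bH_i)$. If the statement is meant as a surrogate of this kind, I would make explicit which term is kept and in which norm the per-component discrepancy is measured. Otherwise I would bound $(\Tr(\bH_i\bDelta))^2\le\Tr(\bDelta)\Tr(\bDelta\bH_i^2)$ by Cauchy--Schwarz in the Frobenius inner product. That yields a term of the stated shape, up to a scale factor $\Tr(\bDelta)$ that would have to be absorbed into the ``$\simeq$''. Pinning down exactly this interpretation is where I expect the real work, and the precise meaning of ``ignoring terms of higher order in $\bDelta$'', to lie.
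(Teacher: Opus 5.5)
Your derivation is correct and is essentially the paper's proof. You use the same bias--variance split and the same linearization of $\mathrm{Var}[\bff_i(\bz)]$ about $\bmu$, which gives $\frac1N\Tr\big((\bSigma-\bDelta)\nabla\bff(\bmu)^T\nabla\bff(\bmu)\big)$. You also get the same leading bias, $-\frac12\Tr\big(\bDelta\nabla^2\bff_i(\bmu)\big)$ per coordinate. The paper reaches the bias by your ``equivalently'' route: the tower property $\bmu_{\by,i}=\mathbb{E}\big[\mathbb{E}[\bff_i(\bx)\mid\bz]\big]$ with $\bx\mid\bz\sim\mathcal{N}(\bz,\bDelta)$, a second-order expansion about $\bz$, and then $\mathbb{E}[\nabla^2\bff_i(\bz)]\simeq\nabla^2\bff_i(\bmu)$. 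You instead expand both expectations about $\bmu$; the two agree at the order kept. Appendix A ends with $\frac14\sum_i\Tr\big(\bDelta\nabla^2\bff_i(\bmu)\big)^2$, which is exactly your $\frac14\sum_i\big(\Tr(\bDelta\nabla^2\bff_i(\bmu))\big)^2$.

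Your ``main obstacle'' is therefore a misprint in the displayed statement, where the square has slipped inside the trace. It is not a mathematical issue, and you should not try to reconcile it.

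The paper itself supports the squared-trace reading. Eq. \eqref{autocov obj} has the square outside the trace. In Theorem \ref{thm 2}, substituting $\bDelta=\rho\bSigma$ gives $\rho^2\cdot\frac14\sum_i\Tr\big(\bSigma\nabla^2\bff_i(\bmu)\big)^2$, and this quadratic dependence on $\rho$ is what yields \eqref{rho opt}. With $\Tr(\bDelta\nabla^2\bff_i(\bmu)^2)$ the objective would be linear in $\rho$.

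The literal form is also false. The bias vanishes at $\bDelta=\bzero$, so its square is $O(\|\bDelta\|^2)$, whereas $\Tr(\bDelta\nabla^2\bff_i(\bmu)^2)$ is linear in $\bDelta$. For $d_x=d_y=1$, $\bff(x)=x^2/2$, $\bSigma=s$, $\bDelta=\delta$, the squared bias is exactly $\delta^2/4$, not $\delta/4$.

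Drop both of your fallbacks:
\begin{itemize}
\item The Isserlis quantity is the mean of the squared pointwise linearization error. The MSE of $\widehat\bmu_\by$ only sees the square of that error's mean, so it is the wrong functional.
\item The Cauchy--Schwarz bound $(\Tr(\bDelta\bH_i))^2\le\Tr(\bDelta)\,\Tr(\bDelta\bH_i^2)$ is only an inequality. Its extra factor $\Tr\bDelta$ changes the order in $\bDelta$, so it cannot be absorbed into ``$\simeq$''.
\end{itemize}

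Your remark about the discarded Hessian part of the variance is fair and worth stating explicitly. At this order it equals $\frac{1}{2N}\sum_i\Tr\big((\nabla^2\bff_i(\bmu)(\bSigma-\bDelta))^2\big)$, which is $\frac{1}{2N}(s-\delta)^2$ in the scalar example. The paper drops it silently by linearizing. It is higher order in $\bSigma-\bDelta$, not in $\bDelta$, so the theorem's ``higher order in $\bDelta$'' clause does not cover it.
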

\begin{proof}
Given in \nameref{Appendix A}.
\end{proof}
\noindent  The MSE of the estimator is given by two terms, the first term of Eq. \eqref{MSE bound} corresponds to the error due to Monte Carlo sampling (variance), and the second term to the error due to the linearization (bias). Thus $\bDelta$ can be seen as a tuning parameter which controls the trade-off between the two sources of error. This is apparent in the two limiting cases: for $\bDelta\to\bzero$ there is pure Monte Carlo sampling (no linearization) and for $\bDelta=\bSigma$ there is pure linearization (no sampling).

Theorem \ref{MSE theorem} suggests a way for choosing $\bDelta$ by minimizing the right-hand side of Eq. \eqref{MSE bound}. This leads to the \textit{semi-definite program}
\begin{align}\label{autocov prob}
        &\min_{\bDelta} \Psi(\bDelta; \bmu, \bSigma, \bff), \\ \nonumber
        &\text{s.t.} \ \bDelta \succeq \bzero,  \ \bSigma-\bDelta \succeq \bzero,
\end{align}
where
\begin{equation} \label{autocov obj}
    \begin{split}
    \Psi(\bDelta;\bmu, \bSigma, \bff) &= \frac{1}{N} \Tr\big((\bSigma-\bDelta)\nabla\bff(\bmu)^T\nabla\bff(\bmu)\big) \\
          &+ \frac{1}{4} \sum_{i=1}^{d_y}\Tr\big(\bDelta \nabla^2\bff_i(\bmu)\big)^2.
    \end{split} 
\end{equation}

For affine functions $\bff$, the second term tends to zero and the minimizer of Eq. \eqref{autocov prob} will be $\bDelta = \bSigma$, leading to a purely linear approximation of the mean. On the other hand, when the first term of Eq. \eqref{autocov obj} vanishes, the program will select a zero covariance, corresponding to a pure Monte Carlo approximation. In general, when both terms of Eq. \eqref{autocov obj} are active, the optimal covariance will balance linearization and Monte Carlo error to minimize the objective.
There are many optimization algorithms that can be used to solve Eq. \eqref{autocov prob}, such as projected gradient descent or interior point methods \cite{bertsekas1997nonlinear}. These algorithms however introduce additional computational complexity to the filtering algorithm, which solves Eq. \eqref{autocov prob} at every iteration.

{In this paper, we consider cases where the degree of nonlinearity of the function $\bff$ is comparable in all directions of the state space. In such cases, it is not necessary to solve the full optimization problem, and we use Eq. \eqref{autocov prob} to define a simpler problem that has an analytical solution, thus reducing the complexity of the filter. We arrive at the simpler problem by assuming that the augmentation covariance is proportional the original covariance as shown in the following theorem.
}

\begin{theorem}\label{thm 2} The program of Eq. \eqref{autocov prob}, when $\bDelta$ is constrained to be proportional to the covariance $\bDelta = \rho \bSigma$, is solved by,
\begin{equation} \label{rho opt}
    \rho^* = \frac{2}{N} \frac{\Tr\big(\bSigma \nabla\bff(\bmu)^T\nabla\bff(\bmu)\big)}{\sum_{i=1}^{d_y}\Tr\big(\bSigma\nabla^2\bff_i(\bmu)\big)^2} \wedge 1.
\end{equation}
\end{theorem}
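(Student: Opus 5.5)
We substitute $\bDelta = \rho\bSigma$ into the objective $\Psi$ of Eq. \eqref{autocov obj}, which turns the semidefinite program \eqref{autocov prob} into a one-dimensional problem in $\rho$. The constraints $\bDelta \succeq \bzero$ and $\bSigma - \bDelta \succeq \bzero$ become $\rho \ge 0$ and $(1-\rho)\bSigma \succeq \bzero$. Assuming $\bSigma \succ \bzero$ (or at least $\bSigma \neq \bzero$), these reduce to $\rho \in [0,1]$.

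Define the scalars
\begin{equation*}
a = \Tr\big(\bSigma\nabla\bff(\bmu)^T\nabla\bff(\bmu)\big) \ge 0, \qquad b = \sum_{i=1}^{d_y}\Tr\big(\bSigma\nabla^2\bff_i(\bmu)\big)^2 \ge 0 .
\end{equation*}
By linearity of the trace,
\begin{equation*}
\Psi(\rho\bSigma) = \frac{1}{N}(1-\rho)\,a + \frac{\rho^2}{4}\, b .
\end{equation*}
This is a convex quadratic in $\rho$ when $b>0$. Setting the derivative to zero gives $-a/N + \rho b/2 = 0$, so the unconstrained minimizer is $\rho_0 = 2a/(Nb)$.

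It remains to impose $\rho\in[0,1]$. Since $a \ge 0$ and $b > 0$, we have $\rho_0 \ge 0$, so the lower constraint is never active. Because the function is convex, its minimizer over $[0,1]$ is the projection of $\rho_0$ onto the interval, namely $\min(\rho_0,1) = \rho_0 \wedge 1$. This is exactly Eq. \eqref{rho opt}.

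The degenerate case $b = 0$ (for example, affine $\bff$) must be handled separately. There $\Psi$ is nonincreasing in $\rho$, so the minimizer is $\rho = 1$. This agrees with the formula under the convention $a/0 = \infty$, giving $\infty \wedge 1 = 1$; if also $a=0$, every feasible $\rho$ is optimal.

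The main point of care is the interpretation of the second term. Eq. \eqref{autocov obj} writes $\Tr(\bDelta\nabla^2\bff_i)^2$, whereas Theorem \ref{MSE theorem} writes $\Tr(\bDelta\,\nabla^2\bff_i^2)$. These are different quantities, and the denominator of Eq. \eqref{rho opt} follows the former, $\big(\Tr(\bSigma\nabla^2\bff_i)\big)^2$. Under either reading, the term is homogeneous of degree two in $\rho$ once $\bDelta=\rho\bSigma$ is substituted (as a product of traces or as $\Tr(\rho\bSigma H\rho\bSigma H)$). So the argument is unchanged provided $b$ is defined to match the reading used in Eq. \eqref{autocov obj}. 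We will state this reading explicitly and then carry out the computation above.
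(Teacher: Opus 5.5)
Your proposal is correct and follows the same route as the paper's proof. You substitute $\bDelta=\rho\bSigma$, minimize the resulting quadratic in $\rho$ by setting its derivative to zero, and clip to $[0,1]$; your convexity/projection argument and the $b=0$ case make explicit what the paper leaves implicit. One correction to your side remark: read literally, $\Tr(\bDelta\nabla^2\bff_i(\bmu)^2)$ in Theorem~\ref{MSE theorem} is linear in $\bDelta$, so only the squared-trace reading derived in Appendix A and used in Eq.~\eqref{rho opt} gives the stated formula---but that is the reading you adopt, so the proof stands.
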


\begin{proof}
    Substituting $\bDelta = \rho \bSigma$ into Eq. \eqref{autocov obj} we obtain the objective for $\rho$:
    \begin{equation}
     \begin{split}   
        \Psi(\rho) &= \frac{1}{N} (1-\rho) \Tr\big(\bSigma \nabla\bff(\bmu)^T\nabla\bff(\bmu)\big) \\
          &+ \rho^2 \frac{1}{4} \sum_{i=1}^{d_y}\Tr\big(\bSigma \nabla^2\bff_i(\bmu)\big)^2,
        \end{split}   
    \end{equation}
which is a quadratic function of $\rho$. Setting $\Psi'(\rho) = 0$ and solving for $\rho$ it is straightforward to obtain the desired expression. Noting that the constraints of the optimization problem \eqref{autocov prob} are translated to $0\leq \rho \leq 1$, we obtain Eq. \eqref{rho opt}.
\end{proof}
In the context of our filtering framework presented in the next section we use Eq. \eqref{rho opt} to derive a novel adaptive filtering algorithm. {Note that Theorems \ref{MSE theorem} and \ref{thm 2} have not been proved for the case of the unscented estimator because of the intractability of the bias. In our experiments with the unscented AGSF we use Eq. \eqref{rho opt} and demonstrate its empirical success.}

{Finally, it of interest to solve Eq. \eqref{autocov prob} for the complete augmentation covariances when the function $\bff$ is highly nonlinear in some, but not all directions of the state space. In such scenarios, it becomes favorable to split the Gaussian components along the directions of high nonlinearity. This results in covariance matrices $\bDelta$ which are shortened in the directions of high nonlinearity and elongated in the directions of low nonlinearity.}

\section{Augmented Gaussian Sum Filter}\label{sec 4}
We now derive a novel class of filtering algorithms, called \textit{augmented Gaussian sum filters} (AGSF), by leveraging the augmented Gaussian approximations presented in Section \ref{sec 3.B}. Our novel framework uses the novel approximation at each time step in the prediction and update steps of the filtering procedure. We note that all algorithms presented in this section are derived for additive models for notational simplicity and can be generalized to non-additive models straightforwardly.

In Section \ref{sec 4.A}, we start by presenting a generic version of the algorithm, which uses the generic \textit{augmented Gaussian approximation} of Section \ref{sec 3.A}. We also particularize using the augmented linear and unscented approximations, and derive the linear-AGSF (L-AGSF) and unscented-AGSF (U-AGSF) algorithms respectively. In Section \ref{sec 4.B}, we prove that the novel AGSF algorithm interpolates between the well known \textit{Gaussian sum filter} (GSF) and \textit{bootstrap particle filter} (BPF), which are special cases of the novel algorithm. Finally, in Section \ref{sec 4.C} we 
discuss strategies for selecting the parameters of the AGSF algorithm.

\subsection{Framework description}\label{sec 4.A}

 \subsubsection{Generic AGSF}\label{sec 4.A.1}
The new filter is summarized in Alg. ~\ref{AGSF} in its basic form. The prior is initialized as a mixture of $M$ Gaussians and iterates over prediction and update steps, whose derivations are given below. We assume that the augmentation covariance matrices $\{\bDelta^{(t)}_m\}_{m=1}^M$ and $\{\bLambda^{(t)}_{mn}\}_{m,n=1}^{M,N_m}$ are parameters of the algorithm set by the user at run time.

Also note that at each step, the number of components of the original mixture grows. To ensure that our algorithm will run on fixed memory and cost, we do a resampling step at the end of the update step (although other mechanisms are also possible).

\noindent\textbf{Prediction:}
Here we derive the predictive density our filter. We assume that the filtering distribution at time $t-1$ is given by the following Gaussian mixture of $M$ components
\begin{equation}
    p(\bx_{t-1}|\by_{1:t-1}) \simeq \sum_{m=1}^M w_{t-1}^{(m)}\mathcal{N}(\bx_{t-1}|\bmu_{t-1}^{(m)}, \bSigma_{t-1}^{(m)}).
\end{equation}
Substituting into Eq. \eqref{pred}, we obtain for the predictive distribution of $\bx_t$,
\begin{equation} \label{pred mix}
p(\bx_t|\by_{1:t-1}) \simeq \sum_{m=1}^M w^{(m)}_{t-1}  p_m(\bx_t{|\by_{1:t-1}}),
\end{equation}
where
\begin{equation*}
\resizebox{0.98\hsize}{!}{
  $p_m(\bx_t{|\by_{1:t-1}})= \int \mathcal{N}(\bx_{t}| \bff(\bx_{t-1}), \bQ)  \mathcal{N}(\bx_{t-1} | \bmu^{(m)}_{t-1}, \bSigma^{(m)}_{t-1}) d\bx_{t-1}$}.
\end{equation*}

This is the marginal pdf of $\bx_t$ with the joint pdf being
\begin{equation*}
\resizebox{0.95\hsize}{!}{
   $p_m(\bx_{t-1}, \bx_t{|\by_{1:t-1}}) = \mathcal{N}(\bx_{t}| \bff(\bx_{t-1}), \bQ)  \mathcal{N}(\bx_{t-1} | \bmu^{(m)}_{t-1}, \bSigma^{(m)}_{t-1})$}.
\end{equation*}

We will approximate the joint $p_m(\bx_{t-1}, \bx_t{|\by_{1:t-1}})$ by a Gaussian mixture using the \textit{augmented Gaussian approximation} of Section \ref{sec 3.B}. We will use the resulting marginal approximation for $p_m(\bx_t{|\by_{1:t-1}})$ which will also be given as a Gaussian mixture. Using Eqs. \eqref{mixture joint approx}-\eqref{cond cross-cov}, we obtain for the joint,
{
\begin{equation*}
p_m(\bx_{t-1}, \bx_t|\by_{1:t-1}) \simeq \frac{1}{N_m} \sum_{n=1}^{N_m} \mathcal{N}_{mn}(\bx_{t-1}, \bx_t),
\end{equation*}
where,
\begin{equation}
    \mathcal{N}_{mn}(\bx_{t-1}, \bx_t)  = \resizebox{0.7\hsize}{!}{$ \mathcal{N}\Bigg[ \begin{pmatrix} \bx_{t-1} \\   \bx_{t} \end{pmatrix} \Bigg| \begin{pmatrix} \bz_{mn} \\  \bmu_{t, mn}^- \end{pmatrix}, \begin{pmatrix} \bDelta^{(t)}_m &   \bC_{mn}^- \\ (\bC^-_{mn})^{T} &   \bSigma_{t, mn}^{-} \end{pmatrix} \Bigg],$}
\end{equation}
and $\bmu_{t, mn}^-, \bSigma_{t, mn}^-$ are given in Eqs. \eqref{AGSF pred mean}-\eqref{AGSF pred cov}, and 
\begin{equation} \label{AGSF pred crosscov}
    \bC_{mn}^- = \mathcal{I}[(\bx_{t-1} - \bz_{mn})(\bff(\bx_{t-1}) - \bmu_{t, mn}^-)^T ; \bz_{mn}, \bDelta^{(t)}_m].
\end{equation}}
The marginal approximation is given by
\begin{equation}
p_m(\bx_t{|\by_{1:t-1}}) \simeq \frac{1}{N_m} \sum_{n=1}^{N_m} \mathcal{N}(\bx_{t} |\bmu_{t, mn}^-, \bSigma_{t, mn}^-).
\end{equation}
Hence from Eq. \eqref{pred mix} we obtain the predictive distribution given in Eq. \eqref{AGSF pred mix}.

\begin{figure}
    \centering
    \begin{tcolorbox}[boxrule=0.2pt]
        \includegraphics[width=0.99\linewidth]{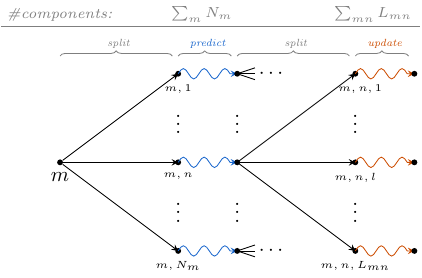}
    \end{tcolorbox}
    \caption{Splitting, prediction and update steps for a component of the AGSF.}
    \label{fig:AGSF step}
\end{figure}

\noindent\textbf{Update:} Here we show how the filtering pdf in Eq. \eqref{AGSF post mix} is obtained. Using the update equation, Eq. \eqref{upd}, with the prior given by Eq. \eqref{AGSF pred mix}, we approximate the posterior of $\bx_t$ as
\begin{equation} \label{joint_mix}
    \resizebox{0.95\hsize}{!}{
     $\widehat p(\bx_t|\by_{1:t}) \propto \sum_{m=1}^{M} \sum_{n=1}^{N_m} w_{mn} \mathcal{N}(\by_t|\bg(\bx_t), \bR)\mathcal{N}(\bx_{t}| \bmu_{t, mn}^-, \bSigma_{t, mn}^-)$}.
\end{equation}
As in the prediction step, we use the augmentation, to approximate the joint 
\begin{equation*}
    p_{mn}(\bx_t, \by_t{|\by_{1:t-1}}) = \mathcal{N}(\by_t|\bg(\bx_t), \bR)\mathcal{N}(\bx_{t}| \bmu_{t, mn}^-, \bSigma_{t, mn}^-)
\end{equation*}
by a Gaussian sum. Using Eq. \eqref{mixture joint approx} we obtain
\begin{equation}\label{joint GM 2}
p_{mn}(\bx_t, \by_t{|\by_{1:t-1}}) \simeq \frac{1}{L_{mn}} \sum_{\ell=1}^{L_{mn}} \mathcal{N}_{nm\ell}(\bx_t, \by_t),
\end{equation}
where,
\begin{equation}
     \mathcal{N}_{nm\ell}(\bx_t, \by_t) = \resizebox{0.6\hsize}{!}{$\mathcal{N}\Bigg[ \begin{pmatrix} \bx_t \\   \by_t \end{pmatrix} \Bigg| \begin{pmatrix} \bs_{mn\ell} \\   \bmu_{\by, mn\ell} \end{pmatrix}, \begin{pmatrix} \bLambda^{(t)}_{mn} &   \bC_{mn\ell} \\ \bC_{mn\ell}^T &   \bS_{\by, mn\ell} \end{pmatrix} \Bigg]$,}
\end{equation}
and $\bmu_{\by, mn\ell}, \bS_{\by, mn\ell}$ and $\bC_{mn\ell}$ are given in Eqs. \eqref{AGSF y-pred mean}-\eqref{AGSF y-pred crosscov}. 

{Combining Eqs. \eqref{joint_mix}-\eqref{joint GM 2} and applying  Eq. (13) from Appendix C (given in the supplementary material), we obtain Eq. \eqref{AGSF post mix} for the posterior of $\bx_t$.}

\noindent\textbf{Resampling:}
During the prediction and update steps of our algorithm, the number of Gaussian components is increased. We start with $M$ components before the prediction step to obtain $\sum_{m}N_m$ before and $\sum_{mn}L_{mn}$ after the update step, respectively. To ensure that our algorithm runs on a constant memory budget, we perform resampling at the end of the update step to reduce the number of components to $M$ although other mechanisms could be performed, {such as Gaussian mixture reduction \cite{schieferdecker2009gaussian, crouse2011look}}. An iteration of the generic AGSF algorithm for one of the mixture components is graphically illustrated in Fig. \ref{fig:AGSF step}

\subsubsection{Linear AGSF} The generic algorithm presented in Sec. \ref{sec 4.A.1} can not be used in practice since the expectations of Eqs. \eqref{AGSF pred mean}-\eqref{AGSF pred cov} and \eqref{AGSF y-pred mean}-\eqref{AGSF y-pred crosscov} need to be approximated using some numerical scheme. We derive the linear-AGSF (L-AGSF) algorithm by using the linear Taylor approximation of the nonlinearities $\bff$ and $\bg$ for Eqs.\eqref{AGSF pred mean}-\eqref{AGSF pred cov} and \eqref{AGSF y-pred mean}-\eqref{AGSF y-pred crosscov}. In this way, we obtain the L-AGSF given in Alg. 2 in the supplementary material.

\subsubsection{Unscented AGSF} We approximate the expectations in Eqs. \eqref{AGSF pred mean}-\eqref{AGSF pred cov} and \eqref{AGSF y-pred mean}-\eqref{AGSF y-pred crosscov} using the unscented approximation that we introduced in Sec. \ref{augm unscented approx}. In this way, we obtain the U-AGSF given in Algs. 3-4 the supplementary material.

\subsection{The AGSF unifies GSF and BPF}\label{sec 4.B}

The AGSF has the \textit{bootstrap particle filter} (BPF) and the \textit{Gaussian sum fitler} (GSF) as special cases. This is proved in the theorem that follows.

\begin{theorem}\label{theorem 3} The AGSF unifies the GSF and BPF in the following way:
    \begin{itemize}
        \item[\textit{(i)}] When at every timestep, the choices $N_m = 1, L_{mn}=1,$ and $\bDelta^{(t)}_m \to \bzero $, $\bLambda^{(t)}_{mn} \to \bzero$ are made for $m=1,\dots,M$, then Alg. \ref{AGSF} is identical to the BPF for an additive state-space model.
        \item[\textit{(ii)}] When at every timestep, the choices $\bDelta^{(t)}_m = \bSigma^{(m)}_{t-1}$, ${\bLambda^{(t)}_{mn} = \bSigma_{t, mn}^-}$ are made for all $m,n$, then the prediction and update steps of Alg. \ref{AGSF} are identical to those of a GSF for an additive state-space model.
    \end{itemize}
\end{theorem}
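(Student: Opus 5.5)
The plan is to verify both claims by substituting the stated parameter choices into the prediction and update formulas of Alg.~\ref{AGSF}, i.e.\ into Eqs.~\eqref{AGSF pred mean}--\eqref{AGSF pred cov}, \eqref{AGSF pred crosscov}, \eqref{AGSF y-pred mean}--\eqref{AGSF y-pred crosscov} and the mixture weights of \eqref{AGSF post mix}. Each limiting case should then reduce step by step to the corresponding algorithm from Section~\ref{sec 2}.

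For \textit{(i)}, I will proceed one time step at a time, by induction on $t$. The induction hypothesis is that the filtering mixture at $t-1$ is a collection of $M$ equally weighted degenerate components $\mathcal{N}(\cdot|\bmu^{(m)}_{t-1},\bSigma^{(m)}_{t-1})$ whose means are the BPF particles. Equal weights hold because resampling has just been performed.

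\emph{Prediction.} With $N_m=1$, a single $\bz_{m1}\sim\mathcal{N}(\bmu^{(m)}_{t-1},\bSigma^{(m)}_{t-1}-\bDelta^{(t)}_m)$ is drawn. As $\bDelta^{(t)}_m\to\bzero$, every Gaussian integral $\mathcal{I}[\bphi;\bz_{m1},\bDelta^{(t)}_m]$ tends to $\bphi(\bz_{m1})$ by continuity of $\bff$ (dominated convergence). Hence $\bmu^-_{t,m1}\to\bff(\bz_{m1})$, $\bC^-_{m1}\to\bzero$, and $\bSigma^-_{t,m1}\to\bQ$. The predictive component is therefore $\mathcal{N}(\bx_t|\bff(\bz_{m1}),\bQ)$, which is the transition density of the bootstrap proposal.

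\emph{Update.} With $L_{m1}=1$, one draws $\bs_{m11}\sim\mathcal{N}(\bff(\bz_{m1}),\bQ-\bLambda^{(t)}_{m1})$. In the limit $\bLambda\to\bzero$ this is exactly $\bs_{m11}\sim p(\bx_t|\bx_{t-1}=\bz_{m1})$. By the same argument, $\bmu_{\by,m11}\to\bg(\bs_{m11})$, $\bS_{\by,m11}\to\bR$, and $\bC_{m11}\to\bzero$. The Kalman gain $\bC\bS^{-1}$ therefore vanishes, so the posterior component collapses to a point mass at $\bs_{m11}$. Its weight, proportional to $w\,\mathcal{N}(\by_t|\bmu_{\by},\bS_{\by})$, becomes proportional to $\mathcal{N}(\by_t|\bg(\bs_{m11}),\bR)$, which is the BPF likelihood weight. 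The final resampling step then matches BPF resampling, closing the induction. I also need to observe that at time $t-1$ the draw of $\bz_{m1}$ collapses to the particle itself, since $\bSigma^{(m)}_{t-1}\to\bzero$ and so $\bz_{m1}=\bmu^{(m)}_{t-1}$. The main delicacy is making these limits rigorous; I would state them as weak convergence of each Gaussian to a Dirac, which needs only continuity and bounded-growth integrability of $\bff$ and $\bg$.

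For \textit{(ii)}, setting $\bDelta^{(t)}_m=\bSigma^{(m)}_{t-1}$ makes the sampling distribution $\mathcal{N}(\bmu,\bzero)$, so every $\bz_{mn}=\bmu^{(m)}_{t-1}$ deterministically. All $N_m$ components then coincide and their average equals a single component whose moments are exactly the moment-matching integrals \eqref{GMM1}--\eqref{GMM3} with $(\bmu,\bSigma)=(\bmu^{(m)}_{t-1},\bSigma^{(m)}_{t-1})$. This is the Gaussian filter prediction for component $m$. Likewise, $\bLambda^{(t)}_{mn}=\bSigma^-_{t,mn}$ forces $\bs_{mn\ell}=\bmu^-_{t,mn}$, so the update reduces to the standard Gaussian-filter update of each component, with weights multiplied by $\mathcal{N}(\by_t|\bmu_{\by},\bS_{\by})$. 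That is precisely the GSF weight update.

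The only subtlety in (ii) is that duplicated identical components merge with summed weights, so the resulting mixture is identical to the GSF mixture. This is why the claim is restricted to the prediction and update steps, excluding resampling. Overall, I expect the limit argument in (i) to be the main obstacle and (ii) to be a direct substitution.
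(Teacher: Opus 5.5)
Your proposal is correct and follows essentially the same route as the paper's proof in Appendix B. In (i), the Gaussian integrals collapse to point evaluations, so the predictive component becomes the transition density $\mathcal{N}(\bff(\bmu^{(m)}_{t-1}),\bQ)$, the gain vanishes, and the weights reduce to the likelihood before resampling. In (ii), the degenerate sampling distributions force $\bz_{mn}=\bmu^{(m)}_{t-1}$ and $\bs_{mn\ell}=\bmu^-_{t,mn}$, so the moments are exactly the Gaussian-filter moment-matching integrals.

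Two of your additions make explicit what the paper handles by substituting delta measures and assuming a particle initialization: the inductive framing (the update produces the degenerate covariances the next prediction needs) and the remark that the Dirac limits need mild growth conditions on $\bff$ and $\bg$. The argument is otherwise the same.
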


\begin{proof}
Given in Appendix B (see supplementary material).
\end{proof}
Theorem \ref{theorem 3} shows that the BPF and the GSF algorithms emerge as special cases in the AGSF framework for particular settings of the covariance parameters. The behavior of the AGSF for intermediate settings of the parameters is an interpolation between the BPF and the GSF which is determined by the choice of the augmentation covariances which is discussed in the next section. {Finally, we note that if in the definition of the augmented estimator of Eq. \eqref{A-KDE}, instead of vanilla Monte-Carlo we use importance sampling \cite{advancesinIS} with some proposal of our choice, the resulting AGSF algorithm will be and interpolation between a GSF and a particle filter which uses this particular proposal, this however is not proved in this paper.}

\subsection{Strategies for setting the parameters of AGSF} \label{sec 4.C}
The choice of the augmentation covariances plays an important role in the performance of the algorithms of the AGSF family. These covariances have to be set at every time step of the AGSF algorithms and they have to satisfy the constraints
\begin{align} \label{eq:agsf psd constr 1}
    \bDelta^{(t)}_m &\succeq \bzero,  \ \bSigma^{(m)}_{t-1}-\bDelta^{(t)}_m \succeq \bzero, \\ \label{eq:agsf psd constr 2}
    \bLambda_{mn}^{(t)} &\succeq \bzero,  \ \bSigma_{t, mn}^--\bLambda_{mn}^{(t)} \succeq \bzero,
\end{align}
for $m=1,\dots,M$, $n=1,\dots,N_m$ and $t=1,\dots,T$.

Moreover, the choices of $\bDelta^{(t)}_m$ and $\bLambda_{mn}^{(t)}$, at time step $t$ will determine whether the prediction and update steps, respectively, of the AGSF will resemble that of the BPF or the GSF, or {if the AGSF operates in a novel, in between setting}. When $\bDelta^{(t)}_m \simeq \bzero$ and $\bLambda_{mn}^{(t)} \simeq \bzero$ the behavior will be more like that of a BPF, whereas when $\bDelta^{(t)}_m \simeq \bSigma^{(m)}_{t-1}$ and $\bLambda_{mn}^{(t)} \simeq \bSigma_{t, mn}^-$ more like that of a GSF. {When $\bDelta^{(t)}_m$ and $\bLambda_{mn}^{(t)}$ are chosen in some intermediate values, i.e., not close to $\bzero$ or $\bSigma_{t-1}^{(m)}$ and $\bSigma_{t, mn}^-$ respectively, then the behavior of the AGSF will be in between the two.}

\begin{center}
\scalebox{1}{
\begin{minipage}{\linewidth}
\begin{algorithm}[H]
\begin{algorithmic}[1]
\STATE \textbf{Parameters:} $N_m$, $\bDelta^{(t)}_{m}$, and $L_{mn}$, $\bLambda^{(t)}_{mn}$ \ for $m=1,\dots,M$ $n=1,\dots,N_m$ and $t=1,\dots,T$
\STATE \textbf{Initialization:}
\STATE $\{w^{(m)}_{0}, \bmu^{(m)}_{0}, \bSigma^{(m)}_{0}\}_{m=1}^M$
\STATE for $t=1,\dots,T$:
\STATE \textbf{ Prediction:} \begin{align} \label{AGSF pred mix}
       \widehat p(\bx_{t}|\by_{1:t-1}) = \sum_{m=1}^M \sum_{n=1}^{N_m} w_{mn}  \mathcal{N}(\bx_{t}| \bmu_{t, mn}^-, \bSigma_{t, mn}^-),
\end{align}
where
\begin{small}
\begin{align} \label{AGSF pred mean}
    \bmu_{t, mn}^- &= \mathcal{I}[\bff(\bx_{t-1}) ; \bz_{mn}],\\ \label{AGSF pred cov}
    \bSigma_{t, mn}^-  &= \mathcal{I}[(\bff(\bx_{t-1}) - \bmu_{t, mn}^-)(\bff(\bx_{t-1}) - \bmu_{t, mn}^-)^T ; \bz_{mn}] + \bQ,
\end{align}
\end{small}
for $\bz_{mn} \sim \mathcal{N}(\bz|\bmu_{t-1}^{(m)}, \bSigma_{t-1}^{(m)}-\bDelta^{(t)}_m), w_{mn} = w_{t-1}^{(m)}/N_m$, $n=1,\dots,N_m$; $m=1,\dots,M$.
\STATE \textbf{ Update:}
\begin{align}\label{AGSF post mix}
    \widehat p(\bx_t|\by_{1:t}) &= \sum_{{m}=1}^M \sum_{n=1}^{N_m} \sum_{\ell=1}^{L_{mn}} w_{mnl} \mathcal{N}(\bx_t|\bmu_{{mn}\ell}, \bSigma_{{mn}\ell}),
\end{align}
where
\begin{align} \label{AGSF weights}
    w_{mnl} &\propto (w_{mn} / L_{mn})  \mathcal{N}(\by_t|\bmu_{\by, {mn}\ell}, \bS_{\by, {mn}\ell}), \\ \label{AGSF posterior mean}
    \bmu_{{mn}\ell} &= \bs_{{mn}\ell} + \bG_{{mn}\ell} (\by_t - \bmu_{\by, {mn}\ell}), \\ \label{AGSF posterior cov}
    \bSigma_{{mn}\ell} &= \bLambda^{(t)}_{mn} - \bG_{{mn}\ell} \bS_{\by, {mn}\ell} \bG_{{mn}\ell}^T,
\end{align}
with,
{
\begin{equation} \label{AGSF y-pred mean}
\hspace{-4.1cm}\resizebox{0.45\hsize}{!}{
$\bmu_{\by, mn\ell} = \mathcal{I}[\bg(\bx_{t}) ; \bs_{mn\ell}, \bLambda^{(t)}_{mn}]$},
\end{equation}
\begin{equation} \label{AGSF y-pred cov}
\resizebox{0.9\hsize}{!}{
$\bS_{\by, mn\ell} = \mathcal{I}[(\bg(\bx_t) - \bmu_{\by,mn\ell})(\bg(\bx_t) - \bmu_{\by, mn\ell})^T ; \bs_{mn\ell}, \bLambda^{(t)}_{mn}] + \bR$},
\end{equation}
}
and,
\begin{equation} \label{AGSF y-pred crosscov}
    \resizebox{0.9\hsize}{!}{
    $\bC_{mn\ell} = \mathcal{I}[(\bx_t - \bs_{mn\ell})(\bg(\bx_t) - \bmu_{\by, mn\ell})^T ; \bs_{mn\ell}, \bLambda^{(t)}_{mn}]$}, 
\end{equation}

\begin{equation} \label{AGSF gain}
\hspace{-4.4cm}\bG_{{mn}\ell} = \bC_{mn\ell} \bS_{\by, {mn}\ell}^{-1},
\end{equation}

\noindent where $\bs_{{mn}\ell} \sim \mathcal{N}(\bs | \bmu_{t, mn}^-, \bSigma_{t, mn}^- - \bLambda^{(t)}_{mn})$ for $m=1,\dots,M, n=1,\dots,N_m$ and $\ell = 1,\dots, L_{mn}$.
\STATE \textbf{ Resampling:} For $m'=1,\dots,M$, sample a triplet $(m n\ell)$ with probability $w_{mn\ell}$ and set 
\begin{equation*}
    \bmu^{(m')}_t = \bmu_{m n\ell} \ ; \ \bSigma_t^{(m')} = \bSigma_{m n\ell}.
\end{equation*}
Set $w_t^{(m)} = 1 / M$.
\end{algorithmic}
\caption{Augmented Gaussian sum filter}
\label{AGSF}
\end{algorithm}
\end{minipage}%
}
\end{center}

Hence it is reasonable to adopt the following heuristic. 
\textit{At time step $t$:}
\begin{itemize}
    \item[1.] \textit{if the function $\bff$ is highly nonlinear we choose $\bDelta^{(t)}_m$ close to $\bzero$. {If $\bff$ is almost linear we choose $\bDelta^{(t)}_m$ close to $\bSigma^{(m)}_{t-1}$, otherwise choose an intermediate $\bDelta^{(t)}_m$.}}
    \item[2.] \textit{if the function $\bg$ is highly nonlinear we choose $\bLambda_{mn}^{(t)}$ close to $\bzero$. {If $\bg$ is almost linear we choose $\bLambda_{mn}^{(t)}$ close to $\bSigma^{-}_{t,mn}$, otherwise choose an intermediate $\bLambda_{mn}^{(t)}$}}.
\end{itemize}
Note that this intuition behind the heuristic is captured in the optimization problem of Eq. \eqref{autocov prob}, where the quantification of the nonlinearity of $\bff$ and $\bg$ is done through the Hessian.

In our experiments, we use two different strategies to choose the covariances:
\begin{itemize}
    \item \textit{Proportional}: $\bDelta^{(t)}_m = \rho_1 \bSigma^{(m)}_{t-1}$ for some $0\leq\rho_1\leq1$ and $\bLambda_{mn}^{(t)} = \rho_2 \bSigma_{t, mn}^-$ for some $0\leq\rho_2\leq1$.
    \item \textit{Automatic:} We use the optimization problem in Eq. \eqref{autocov prob} or the approximate version of Eq. \eqref{rho opt} to set the covariances automatically. 
\end{itemize}
In both strategies the constraints in Eqs. \eqref{eq:agsf psd constr 1}-\eqref{eq:agsf psd constr 2} are satisfied. The two strategies can be used in any combination for choosing $\bDelta^{(t)}_m$ and $\bLambda_{mn}^{(t)}$. For example we may at each time-step choose $\bDelta^{(t)}_m$ proportionally, i.e., $\bDelta^{(t)}_m = \rho \bSigma^{(m)}_{t-1}$, while choosing $\bLambda_{mn}^{(t)}$ automatically, using the optimization problem. {Finally we note that }

\section{Numerical Experiments}\label{sec 5}
In this section, we present experimental results involving the novel AGSF algorithms as well as the classical GSF, BPF, {and APF algorithms.  
In Experiment A, we compare the performance the various algorithms for a classical maneuvering target tracking application (part of the results is in the supplementary material). In Experiment B, we evaluate the performance of an adaptive AGSF algorithm 
which selects the augmentation covariances automatically by exploiting Eq. \eqref{rho opt}.} All algorithms are implemented in JAX \cite{jax2018github} and the experiments are run on an Apple M2 Pro CPU.

\subsection*{Experiment A: Maneuvering target tracking}\label{section 5.A}
We consider a classical example from the signal processing literature, namely that of the tracking a maneuvering target in 2D, given measurements of its bearing with respect to the sensor and range, i.e., its distance from the sensor. We describe this with the following state-space model:
\begin{align}\label{BOT dyn}
    \bx_t &= \mathbf{F}_t\bx_{t-1} + \mathbf{G} \bq_t, \\ \label{BOT obs}
    \by_t &= \begin{pmatrix}
        \sqrt{x_{1t}^2 + x_{2t}^2} \\
        \tan^{-1}(x_{2t}/x_{1t})
    \end{pmatrix}   + \br_t,
\end{align}
where $\bx_t = (x_{1t}, v_{1t}, x_{2t}, v_{2t})$ is the 4-dimensional state vector, $(x_{1t}, x_{2t})$ is the position vector of the target and $(v_{1t}, v_{2t})$ the velocity vector. 

The motion of the maneuvering target is allowed to switch between the \textit{constant-velocity} (CV) and \textit{constant-turn} (CT) modes, described by the matrices, 
\begin{align}
    \mathbf{F}_{CV} &= \begin{pmatrix}
        1 & dt & 0 & 0 \\
        0 & 1 & 0 & 0 \\
        0 & 0 & 1 & dt \\
        0 & 0 & 0 & 1
    \end{pmatrix},
\end{align}
and
\begin{align}
    \mathbf{F}^{CT}_{\pm} &= \begin{pmatrix}
        1 & \sin(\Omega^{\pm}_t dt)\over\Omega^{\pm}_t & 0 & -\frac{1-\cos(\Omega^{\pm}_t dt)}{\Omega^{\pm}_t}  \\
        0 & \cos(\Omega^{\pm}_t dt)  & 0 & -\sin(\Omega^{\pm}_t dt) \\
        0 & \frac{1-\cos(\Omega^{\pm}_t dt)}{\Omega^{\pm}_t} & 1 & \sin(\Omega^{\pm}_t dt)\over\Omega^{\pm}_t \\
        0 & \sin(\Omega^{\pm}_t dt) & 0 & \cos(\Omega^{\pm}_t dt)
    \end{pmatrix},
\end{align}

\noindent where,
\begin{equation}
    \Omega^{\pm}_t = \pm \frac{a}{\sqrt{v_{1t}^2+v_{2t}^2}},
\end{equation}
$dt=1.0$. In this experiment we consider settings with $a=0.5$ and $a=0.05$. The dynamics of the target are defined by the matrix,
\begin{equation}
    \mathbf{F}_t = \begin{cases}
       \mathbf{F}^{CT}_+, & t \leq 2T/5, \\ 
       \mathbf{F}_{CV},  & 2T/5 < t \leq 3T/5\\ 
       \mathbf{F}^{CT}_-,  & 3T/5 < t\leq T,
    \end{cases},
\end{equation}

\noindent which describes an object following a clockwise circular motion, then a constant velocity motion, and finally an anti-clockwise circular motion. 

The dynamical noise vector $\bq_t$ is 2-dimensional with covariance matrix $\bQ = 10^{-6}\bI_2$. Finally,
\begin{equation}
 \ \mathbf{G} = \begin{pmatrix}
        0.5 & 1  \\
        1 & 0  \\
        0 & 0.5  \\
        0 & 1 
    \end{pmatrix},
\end{equation}
 
\noindent and $\bR = \sigma^2 \bI_2$, where $\sigma^2$ takes values in $\{25 \times 10^{-1}, 25 \times 10^{-3}, 25 \times 10^{-6}\}$. For the purposes of this experiment, all model parameters and switching behaviors are known by all filters.

\begin{table*}
\centering
\caption{{\textbf{{Experiment A}.} Comparison between the MSE and LPE for various settings of the L-GSF, U-GSF, and BPF with the proposed L-AGSF and U-AGSF algorithms. We compare the error metrics for $a=0.5$ and three observation noise values. Lower values are better.}}
\label{table::Exp A.1}
\resizebox{\textwidth}{!}{
\begin{tabular}{llcccccc}
\toprule
& & \multicolumn{2}{c}{$a=0.5, \sigma^2=25\times 10^{-1}$} & \multicolumn{2}{c}{$a=0.5, \sigma^2=25\times 10^{-3}$} & \multicolumn{2}{c}{$a=0.5, \sigma^2=25\times 10^{-6}$} \\
\cmidrule(lr){3-4} \cmidrule(lr){5-6} \cmidrule(lr){7-8}
& \textsc{parameters} & MSE & LPE & MSE & LPE & MSE & LPE \\
\midrule
\multirow{3}{*}{\rotatebox[origin=c]{90}{L-GSF}} & $M=1 (EKF)$ & $201.47\pm62.68$ & $3.07\times 10^{4}\pm4.16\times 10^{3}$ & $19.42\pm8.65$ & $9.13\times 10^{3}\pm2.18\times 10^{3}$ & $0.21\pm0.15$ & $9.42\times 10^{3}\pm6.17\times 10^{3}$ \\
  & $M=100$ & $62.48\pm13.46$ & $972.78\pm84.06$ (50.0\%) & $69.19\pm35.37$ & $3.07\times 10^{3}\pm2.56\times 10^{3}$ (95.0\%) & $0.44\pm0.43$ & $1.77\times 10^{4}\pm1.33\times 10^{4}$ \\
  & $M=1000$ & $101.22\pm21.27$ & $\textsc{n/a}$ (0\%) & $12.38\pm11.04$ & $92.67\pm8.42$ (90.0\%) & $0.02\pm6.56\times 10^{-3}$ & $1.08\times 10^{3}\pm138.07$ \\
\midrule
\multirow{3}{*}{\rotatebox[origin=c]{90}{U-GSF}} & $M=1 (UKF)$ & $15.37\pm8.60$ & $6.18\times 10^{5}\pm4.68\times 10^{5}$ & $4.13\pm1.97$ & $9.55\times 10^{3}\pm8.92\times 10^{3}$ & $0.02\pm5.39\times 10^{-3}$ & $713.35\pm293.59$ \\
  & $M=100$ & $16.40\pm8.57$ & $0.73\pm1.46$ & $0.25\pm0.13$ & $21.29\pm6.04$ & $0.06\pm0.04$ & $144.36\pm79.03$ \\
  & $M=1000$ & $22.07\pm9.20$ & $0.59\pm0.99$ & $0.40\pm0.15$ & $22.51\pm6.99$ & $7.10\times 10^{-3}\pm1.42\times 10^{-3}$ & $73.93\pm39.61$ \\
\midrule
\multirow{4}{*}{\rotatebox[origin=c]{90}{BPF}} & $M=100$ & $46.03\pm29.72$ & $4.55\times 10^{6}\pm2.49\times 10^{6}$ & $15.41\pm5.97$ & $6.69\times 10^{6}\pm2.85\times 10^{6}$ & $40.64\pm21.23$ & $1.81\times 10^{7}\pm1.09\times 10^{7}$ \\
  & $M=1000$ & $5.29\pm2.94$ & $3.99\times 10^{5}\pm3.14\times 10^{5}$ & $3.71\pm1.98$ & $1.42\times 10^{6}\pm9.45\times 10^{5}$ & $74.06\pm37.09$ & $3.82\times 10^{7}\pm2.15\times 10^{7}$ \\
  & $M=10000$ & $1.46\pm0.65$ & $3.98\times 10^{3}\pm3.92\times 10^{3}$ & $0.03\pm7.26\times 10^{-3}$ & $66.97\pm25.61$ & $74.21\pm33.05$ & $3.71\times 10^{7}\pm1.48\times 10^{7}$ \\
  & $M=100000$ & $0.76\pm0.20$ & $98.03\pm13.06$ & $0.03\pm5.06\times 10^{-3}$ & $21.76\pm10.65$ & $21.13\pm12.32$ & $1.10\times 10^{7}\pm6.84\times 10^{6}$ \\
\midrule
\multirow{4}{*}{\rotatebox[origin=c]{90}{APF}} & $M=100$ & $102.30\pm31.25$ & $2.77\times 10^{7}\pm8.34\times 10^{6}$ & $15.49\pm4.21$ & $6.38\times 10^{6}\pm1.94\times 10^{6}$ & $116.62\pm34.29$ & $5.94\times 10^{7}\pm1.59\times 10^{7}$ \\
  & $M=1000$ & $43.49\pm23.42$ & $1.12\times 10^{7}\pm6.06\times 10^{6}$ & $20.31\pm9.88$ & $8.62\times 10^{6}\pm3.96\times 10^{6}$ & $158.41\pm101.61$ & $6.97\times 10^{7}\pm4.64\times 10^{7}$ \\
  & $M=10000$ & $4.29\pm2.95$ & $9.81\times 10^{5}\pm1.05\times 10^{6}$ & $1.68\pm1.37$ & $6.48\times 10^{5}\pm4.86\times 10^{5}$ & $62.74\pm38.49$ & $3.45\times 10^{7}\pm2.10\times 10^{7}$ \\
  & $M=100000$ & $0.94\pm0.23$ & $349.70\pm90.77$ & $0.03\pm9.04\times 10^{-3}$ & $826.71\pm712.70$ & $0.94\pm1.03$ & $4.66\times 10^{5}\pm3.94\times 10^{5}$ \\
\midrule
\multirow{3}{*}{\rotatebox[origin=c]{90}{L-AGSF}} & $M=2, N=5, L=5$ & $75.62\pm19.18$ & $97.68\pm30.62$ & $1.29\pm0.67$ & $8.32\pm8.57$ & $0.06\pm9.83\times 10^{-3}$ & $-3.34\pm0.06$ \\
  & $M=10, N=5, L=5$ & $21.92\pm5.92$ & $17.36\pm3.44$ & $0.33\pm0.05$ & $-0.48\pm0.09$ & $0.02\pm1.40\times 10^{-3}$ & $-3.44\pm0.03$ \\
  & $M=100, N=5, L=5$ & $7.51\pm2.23$ & $5.25\pm0.45$ & $0.33\pm0.05$ & $-0.46\pm0.07$ & $0.01\pm1.67\times 10^{-3}$ & $-3.44\pm0.02$ \\
\midrule
\multirow{3}{*}{\rotatebox[origin=c]{90}{U-AGSF}} & $M=2, N=5, L=5$ & $38.42\pm13.29$ & $108.67\pm51.99$ & $0.50\pm0.14$ & $1.11\pm1.93$ & $0.02\pm2.26\times 10^{-3}$ & $-6.87\pm0.24$ \\
  & $M=10, N=5, L=5$ & $34.53\pm10.29$ & $72.97\pm34.87$ & $0.21\pm0.04$ & $-1.86\pm0.13$ & $0.01\pm1.27\times 10^{-3}$ & $-7.21\pm0.16$ \\
  & $M=100, N=5, L=5$ & $7.14\pm1.92$ & $4.52\pm0.95$ & $0.16\pm0.02$ & $-2.09\pm0.07$ & $0.01\pm1.08\times 10^{-3}$ & $-7.00\pm0.17$ \\
\bottomrule
\end{tabular}
}
\end{table*}

We compare the estimation performance for the various algorithms. To measure estimation performance, we use two metrics, the mean-squared error (MSE), and the log-probability error (LPE), both averaged across time that are defined below.

\begin{align}\label{mse}
    \text{MSE(alg)} &= \frac{1}{T}\sum_{t=1}^T ||\bx_{t}-\widehat \bx_{t}^{(\text{alg})}||^2 , \\ \label{lpe}
    \text{LPE(alg)} &= \frac{1}{T}\sum_{t=1}^T -\log \widehat p_{\text{alg}}(\bx_t|\by_{1:t-1})
\end{align}

\noindent where $\text{alg}\in\{\text{L-GSF, U-GSF, L-AGSF, U-AGSF, BPF, APF}\}$, $\widehat p_{\text{alg}}(\bx_t|\by_{1:t-1})$ is the filtered posterior estimate of the algorithm and $\widehat \bx_{t}^{(\text{alg})}$ the mean estimate. The MSE measures the accuracy of the point estimate (here, the posterior mean), i.e., how close \(\widehat{\bx}^{(\text{alg})}_{t}\) is to the true latent state \(\bx_{t}\) on average across time (we have access to the latent state only for evaluation purposes). In contrast, the LPE allows us to assess the quality of the entire filtered distribution \(\widehat p_{\text{alg}}(\bx_t\mid \by_{1:t-1})\), by penalizing algorithms that assign low probability density to the realised (true) state. This captures the behavior both in the point and uncertainty estimates of the algorithms.

Our results are shown in Tables \ref{table::Exp A.1} and III (given in the supplementary material). In each experiment, we consider different settings for each algorithm, varying the number of components or particles used. For each algorithm setting, we do $N_{\text{sim}}=100$ simulations of the state-space model in Eqs. \eqref{BOT dyn}-\eqref{BOT obs} for $T=500$ timesteps. Augmentation covariances for the prediction and update steps are chosen to be proportional to the filter covariance with $\rho_1=\rho_2 = 0.9$. Typical behaviors of L-GSF, L-AGSF, and BPF algorithms for various parameter settings are illustrated in Figs. 4-9, given in the supplementary material. We plot the position component $(x_1, x_2)$ of samples from the filtered posteriors for each algorithm. For the L-GSF and L-AGSF we also plot component means. In the rows we have an increasing number of components/particles for each algorithm ($M=100, M=500, M=1000$).

Across all scenarios, the PFs exhibit degeneracy leading to overconfident estimates, with the frequency and severity of this degeneracy depending strongly on both the observation noise and the particle number. In the higher- and moderate-noise settings ($\sigma^2=25\times 10^{-1}$ and $\sigma^2=25\times 10^{-3}$), this occurs for particle counts up to $M=10^4$, and its probability decreases as $M$ increases, as can be seen from the LPE metric. In the small-noise regime ($\sigma^2=25\times 10^{-6}$), degeneracy remains present even at $M=10^5$ with non-vanishing probability. This is due to the form of the weights of the BPF and APF, which are proportional to the observation distribution $p(\by_t|\bx_t)$. When the observation noise is very small the observation distribution is a very narrow Gaussian. Consequently, only the weights of particles that are closest to the observation-consistent manifold are non-vanishing, while the rest are essentially zero. As a result, after normalization and resampling, only a tiny subset of particles survives, the effective sample size collapses, and the filter becomes prone to locking onto an incorrect, overconfident mode, from which it cannot recover if the proposed particles lie far away from the true state. These behaviors are clearly illustrated in Figs. 4-9 of the supplementary material.
 
For the GSF family there is an opposite tendency: for higher noise settings ($\sigma^2=25\times 10^{-1}$ and  $\sigma^2=25\times 10^{-3}$), posterior components are often very spread out, and non-negligible weights are assigned to components with very low likelihood. Consequently, there is a lot of computational effort wasted on those components who do not contribute posterior mass. More crucially, when the weights of those low likelihood components become significant, the filter becomes unstable and produces posterior estimates that are not consistent with the true motion of the target, as can be witnessed in Figs. 4-5 and 7-8 of the supplementary material. As the noise decreases this behavior becomes less frequent, although a degree of instability exists also in the low noise setting ($\sigma^2=25\times 10^{-6}$), seen in Figs. 6 and 9 of the supplementary. 

Across scenarios, our results show that the AGSF behaves predictably and robustly. For higher observation noise ($\sigma^2=25\times 10^{-1}$ and $\sigma^2=25\times 10^{-3}$) the AGSF is able to represent a wider posterior uncertainty without wasting computational resources or becoming unstable. As the observation noise decreases, the AGSF is able to maintain a reasonable posterior estimate with reduced but not collapsed uncertainty or degeneracy, as can be seen from Figs. 6 and 9 of the supplementary. These results are corroborated by the MSE and LPE measurements in Tables \ref{table::Exp A.1} and III (given in the supplementary material), which show that the AGSF consistently avoids the pitfalls of both the BPF and GSF families, achieving good estimation accuracy and well-calibrated posteriors across all scenarios.
 
Overall, the AGSF is a very robust algorithm, using the augmentation covariances to avoid degeneracy on the one hand, while limiting overall posterior spread on the other. As a result, the AGSF reliably produces reasonable posterior estimates consistently. Finally, our experiments bolster the interpretation of the AGSF as a \textit{middle ground} between PFs and GSFs: avoiding degeneracy by using Gaussian components, which however are controllable and are not allowed to diverge. 

\subsection*{Experiment B: Adaptive AGSF}\label{section 5.B}
In this section, we present the adaptive version of the AGSF and evaluate its performance against other algorithms in a toy model. We show that by using Eq. \eqref{rho opt} the adaptive AGSF algorithm is able to change its behavior in real time to locally match the performance of a GSF or a BPF. This allows the adaptive AGSF to achieve better performance than both the GSF and BPF. 

To highlight the adaptive nature of the AGSF we choose to work with a SSM which switches between a linear Gaussian model and a stochastic volatility model, defined by
\begin{align} \label{eq:ssm2.1}
    \bx_t &= \bPhi \bx_{t-1} + \bq_t, \\  \label{eq:ssm2.2}
   \by_t &= u_t\bV_t \br_t + (1-u_t) (\bH \bx_t + \br_t),
\end{align}
where,
\begin{equation} \label{eq:ssm2.3}
    \bV_t = \beta \text{diag}(e^{\bx_{t,1}/\sigma}, \dots,e^{\bx_{t,d_x}/\sigma}),
\end{equation}
and $u_t\in[0,1]$ is a known input parameter. For $u_t=0$ the model is a linear Gaussian (LG) model whereas for $u_t=1$ it is a multivariate stochastic volatility (MSV) \cite{Chib2009}. We use the parameter values, $\bPhi = 0.8\bI$, $\bH = \bI$, $\beta = 0.5$, $\sigma = 4.0$, $\bQ = 10\bI$, and $\bR = 10^{-1}\bI$. The observation noise vector has mean $\br_0 = 10^{-4} \textbf{1}$ proportional to the all-ones vector. We set the dimensionality of the system to $d_x=d_y=4$ and use inputs that switch between 0 and 1 every 20 timesteps, starting from $u_1 = 0$. This defines a switching behavior that periodically shifts from LG to MSV model. The AGSF algorithms are run with proportional augmentation covariances in the the prediction step with $\rho_1 = 0.9$, i.e., $\bDelta^{(t)}_m = 0.9\bSigma^{(m)}_{t-1}$ for all $t$. In the update step the augmentation covariances are chosen proportional, with $\rho_2$ determined via Eq. \eqref{rho opt}.


\begin{table}
\centering
\caption{\textbf{Experiment B.}  {Comparison between the MSE and LPE of various settings of GSF, AGSF, and PF algorithms. Lower values are better.}}
\label{Table::Experiment B.1}
\resizebox{\columnwidth}{!}{
\begin{tabular}{llccc}
\toprule
& \textsc{parameters} & MSE & LPE & Runtime (s) \\
\midrule
\multirow{3}{*}{\rotatebox[origin=c]{90}{L-GSF}} & $M=10$ & $52.69\pm1.29$ & $6.61\pm0.04$ & $0.54\pm9.09\times 10^{-3}\,\mathrm{s}$ \\
  & $M=100$ & $53.21\pm2.59$ & $6.54\pm0.06$ & $0.63\pm0.01\,\mathrm{s}$ \\
  & $M=1000$ & $53.63\pm1.48$ & $6.64\pm0.03$ & $0.70\pm8.96\times 10^{-3}\,\mathrm{s}$ \\
\midrule
\multirow{3}{*}{\rotatebox[origin=c]{90}{U-GSF}} & $M=10$ & $49.90\pm1.36$ & $6.55\pm0.03$ & $0.69\pm0.03\,\mathrm{s}$ \\
  & $M=100$ & $52.27\pm2.12$ & $7.90\pm0.89$ & $0.86\pm0.08\,\mathrm{s}$ \\
  & $M=1000$ & $50.32\pm1.26$ & $6.55\pm0.03$ & $0.98\pm0.05\,\mathrm{s}$ \\
\midrule
\multirow{3}{*}{\rotatebox[origin=c]{90}{BPF}} & $M=100$ & $21.77\pm0.78$ & $2.07\times 10^{6}\pm8.56\times 10^{4}$ & $0.87\pm0.07\,\mathrm{s}$ \\
  & $M=1000$ & $15.77\pm0.29$ & $4.69\times 10^{5}\pm2.39\times 10^{4}$ & $0.87\pm0.04\,\mathrm{s}$ \\
  & $M=10000$ & $14.87\pm0.40$ & $1.20\times 10^{5}\pm5.03\times 10^{3}$ & $0.79\pm0.03\,\mathrm{s}$ \\
\midrule
\multirow{3}{*}{\rotatebox[origin=c]{90}{APF}} & $M=100$ & $20.79\pm0.39$ & $1.72\times 10^{6}\pm3.94\times 10^{4}$ & $0.92\pm0.07\,\mathrm{s}$ \\
  & $M=1000$ & $16.24\pm0.22$ & $3.33\times 10^{5}\pm1.42\times 10^{4}$ & $0.87\pm0.03\,\mathrm{s}$ \\
  & $M=10000$ & $15.44\pm0.39$ & $1.01\times 10^{5}\pm1.27\times 10^{4}$ & $0.87\pm0.02\,\mathrm{s}$ \\
\midrule
\multirow{3}{*}{\rotatebox[origin=c]{90}{L-AGSF}} & $M=10, N=5, L=5$ & $18.65\pm0.48$ & $5.65\pm0.03$ & $0.99\pm0.09\,\mathrm{s}$ \\
  & $M=100, N=5, L=5$ & $15.48\pm0.28$ & $5.47\pm0.02$ & $1.15\pm0.08\,\mathrm{s}$ \\
  & $M=1000, N=5, L=5$ & $14.59\pm0.26$ & $5.44\pm0.02$ & $1.08\pm0.05\,\mathrm{s}$ \\
\midrule
\multirow{3}{*}{\rotatebox[origin=c]{90}{U-AGSF}} & $M=10, N=5, L=5$ & $17.46\pm0.39$ & $5.58\pm0.02$ & $1.10\pm0.02\,\mathrm{s}$ \\
  & $M=100, N=5, L=5$ & $15.10\pm0.35$ & $5.46\pm0.02$ & $1.13\pm0.02\,\mathrm{s}$ \\
  & $M=1000, N=5, L=5$ & $14.46\pm0.32$ & $5.42\pm0.01$ & $1.23\pm0.01\,\mathrm{s}$ \\
\bottomrule
\end{tabular}}
\end{table}

\begin{figure}
     \centering
     \includegraphics[width=\columnwidth]{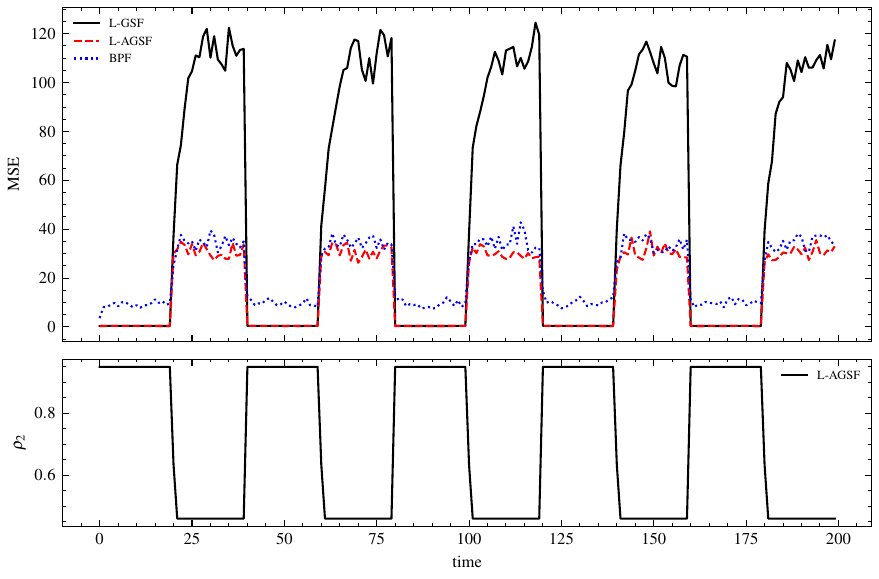}
    \caption{{\textbf{{Experiment B.}} \textit{(Top)} MSE plotted in time for L-GSF, U-GSF$(M=1000)$, L-AGSF, U-AGSF($M=100,N=5,L=5$) and BPF ($M=100$). \textit{(Bottom)} Plot of the proportionality parameters of the update step $\rho_2$. We see that when the model is linear the AGSF algorithms behave like the GSF and when it becomes nonlinear (MSV) the proportionality decreases, leading the AGSF algorithms to behave more like a particle filter.}} 
    \label{fig:exp B.1}
\end{figure}

We report our results for Experiment B in Table \ref{Table::Experiment B.1}. The results confirm our intuition which is that the adaptive AGSF, by switching its behavior in real-time will be able to outperform the GSF and BPF algorithm. The reason for the improved performance is illustrated in the top panel of Fig. \ref{fig:exp B.1}, which plots the MSE in time for a given setting of the L-GSF, L-AGSF, and BPF algorithms. As seen from the figure, the AGSF algorithm switches its behavior to match the best algorithm for the linear and nonlinear regimes. This allows it to achieve an optimal MSE in the course of the time-evolution of the model. In the bottom panel of Fig. \ref{fig:exp B.1}, we plot the coefficients $\rho_2$ in time. As we can see there, the coefficients are close to one when the model is linear and drop to approximately $0.5$ when the model becomes nonlinear. This nicely illustrates the mechanism of adaptation of the AGSF to the nonlinearity of the model. To the best of our knowledge, this type of adaptive behavior is new in the filtering literature. This experiment is a proof-of-concept for combining two filtering algorithms in this way, and the results are encouraging. We are excited about the potential applications that this approach can have in multiple fields where Bayesian filtering is routinely used. 

\section*{Conclusions}

We have proposed the augmented Gaussian sum filter, a novel class of algorithms which unify Gaussian sum and particle filters. The AGSF introduces an augmented Gaussian approximation, based on a set of latent variables and associated covariance parameters. By tuning the covariances it interpolates between GSFs and PFs which are special cases within the class. This allows for the design of a novel adaptive version of the algorithm which behaves more like a GSF or a PF according to the local nature of the nonlinearities. In a target-tracking application we have shown that the AGSF is an efficient algorithm which is robust towards common failure modes of GSFs and PFs. Moreover we have demonstrated experimentally the switching behavior of the adaptive AGSF. The AGSF framework can be extended in various ways and may be applied to many interesting problems. Future work could explore the use of more sophisticated proposal distributions beyond simple Monte Carlo sampling, such as variational or adaptive importance sampling schemes. Additionally, more advanced numerical integration methods could be incorporated to replace linearization or unscented transformations, enabling higher accuracy in capturing nonlinear dynamics. Finally, the adaptive AGSF’s ability to adjust to the local nonlinearities makes it well suited for detecting abrupt or gradual changes in system behavior. This opens the door to its use in tasks such as structural break detection, regime switching, and automatic model selection, where adaptive inference mechanisms are essential for identifying shifts in the underlying dynamics.

\appendices
\section*{Appendix A}\label{Appendix A}
\section*{Proof of Theorem 1}
The theorem is an application of Taylor's expansion to the MSE of the estimator of Eq. \eqref{mean estimator}. We write for the MSE,
\begin{align}\nonumber
    MSE(\widehat \bmu_{\by}) &= \mathbb{E}\big[||\widehat \bmu_{\by} - \bmu_{\by}||^2_2\big]  \\\nonumber
        & = \mathbb{E}\big[||\widehat \bmu_{\by} - \mathbb{E}[\widehat \bmu_{\by}] + \mathbb{E}[\widehat \bmu_{\by}] - \bmu_{\by}||^2_2\big] \\ \label{MSE1}
        & = \mathbb{E}\big[||\widehat \bmu_{\by} - \mathbb{E}[\widehat \bmu_{\by}] ||^2_2\big] + ||\mathbb{E}[\widehat \bmu_{\by}] - \bmu_{\by}||^2_2,
\end{align}
where the first term is the error due to Monte Carlo sampling, associated to the variance of the estimator, and the second term is the error due to linearization which corresponds to the bias. We now do a Taylor approximation in each term.

\noindent\textbf{Variance term:} We have
\begin{align}
    \mathbb{E}\big[||\widehat \bmu_{\by} -  \mathbb{E}\widehat \bmu_{\by}||_2^2\big] &=
    \mathbb{E}\big[\sum_{i=1}^{d_y} \big(\widehat \bmu_{\by,i} - \mathbb{E}\widehat \bmu_{\by,i}\big)^2\big] \\
    &= \sum_{i=1}^{d_y} \mathbb{V}ar\big[\widehat \bmu_{\by,i}\big].
\end{align}
For the variance of the scalar components of $\widehat \bmu_{\by}$ we have,
\begin{align}\label{variance eq}
    \mathbb{V}ar\big[\widehat \bmu_{\by,i}\big] = \mathbb{V}ar\big[\frac{1}{N}\sum_{n=1}^N \bff_i(\bz_n)\big] = \frac{\mathbb{V}ar\big[\bff_i(\bz)\big]}{N}.
\end{align}
The first order Taylor expansion of $\bff_i(\bz)$ around $\bmu$ is
\begin{equation*}
    \bff_i(\bz) \simeq \bff_i(\bmu) + \nabla\bff_i(\bmu)^T(\bz - \bmu).
\end{equation*}
Substituting this into Eq. \eqref{variance eq} we obtain
\begin{equation}
    \begin{split}
        \mathbb{V}ar\big[\widehat \bmu_{\by,i}\big] \simeq \frac{1}{N} \nabla\bff_i(\bmu)^T(\bSigma - \bDelta)\nabla\bff_i(\bmu).
    \end{split}
\end{equation}
Therefore we have for the variance term
\begin{equation}
    \begin{split}
        \mathbb{E}\big[||\widehat \bmu_{\by} -  \mathbb{E}\widehat \bmu_{\by}||_2^2\big] &\simeq \frac{1}{N}\sum_{i=1}^{d_y} \nabla\bff_i(\bmu)^T(\bSigma - \bDelta)\nabla\bff_i(\bmu) \\
        &= \frac{1}{N}\Tr\big( \nabla\bff(\bmu)^T\nabla\bff(\bmu)(\bSigma - \bDelta)\big).
    \end{split}
\end{equation}

\noindent\textbf{Bias term:} Similarly, the bias term is decomposed as
\begin{equation}\label{bias sum}
    ||\mathbb{E}[\widehat \bmu_{\by}] - \bmu_{\by}||^2_2 = \sum_{i=1}^{d_y} (\mathbb{E}\widehat \bmu_{\by,i} - \bmu_{\by,i})^2.
\end{equation}
We can approximate the terms in the sum by
\begin{align}
    \mathbb{E}[\widehat \bmu_{\by,i}] - \bmu_{\by,i} &= \mathbb{E}[\bff_i(\bz)] - \mathbb{E}\big[\mathbb{E}[\bff_i(\bx)|\bz]\big] \\ 
    &= \mathbb{E}\big[\bff_i(\bz) - \mathbb{E}[\bff_i(\bx)|\bz]\big] \\
    &\simeq -\frac{1}{2}  \Tr \big( \bDelta\mathbb{E}[\nabla^2\bff_i(\bz)] \big),
\end{align}
where we have used the second order Taylor expansion of $\bff_i(\bx)$ around $\bz$
\begin{equation*}
    \bff_i(\bx)  \simeq \bff_i(\bz) + \nabla\bff_i(\bz)^T(\bx - \bz) + \frac{1}{2} (\bx - \bz)^T\nabla^2\bff_i(\bz)(\bx - \bz),
\end{equation*}
and taken the expectation with respect to $p(\bx|\bz) = \mathcal{N}(\bx|\bz, \bDelta)$. Finally, by using the approximation $\mathbb{E}[\nabla^2\bff_i(\bz)] \simeq \nabla^2\bff_i(\bmu)$ and substituting into Eq. \eqref{bias sum}, we obtain for the bias term
\begin{equation}
    ||\mathbb{E}[\widehat \bmu_{\by}] - \bmu_{\by}||^2_2 \simeq \frac{1}{4}\sum_{i=1}^{d_y} \Tr \big(\bDelta \nabla^2\bff_i(\bmu) \big)^2.
\end{equation}

\bibliographystyle{IEEEbib_short}
{\small
\bibliography{biblio.bib}
}

\newpage


\renewcommand{\thetable}{\Roman{table}}
\setcounter{table}{2}
\setcounter{algorithm}{1}
\setcounter{figure}{3}

\begin{center}
{\large\textbf{Supplementary Material}}
\end{center}

\vspace{1em}

\section*{Appendix B} \label{Appendix B}
\section*{Proof of Theorem 3}
\textit{(i)} For an overview of the BPF algorithm we refer the reader to Algorithm 11.9 of Ref. [4] in the text. We initialize the AGSF recursion with a particle approximation (i.e. take ${\bSigma_{t-1}^{(m)} \to \bzero}, \ {m=1,\dots,M}$) and show that it is identical to that of the BPF.

\noindent \textbf{Prediction:} When $\bDelta^{(t)}_m = \bSigma_{t-1}^{(m)} \to \bzero$ we have ${\bz_{m} = \bmu_{t-1}^{(m)}, \ m=1,\dots,M}$, where we have dropped the $n$ index since we take $N_m=1$ for all $m$. Moreover, the expectations of Eqs. (60)-(61), are with respect to the delta measure, since
\begin{equation*}
    \mathcal{N}(\bx_{t-1}|\bz_{m}, \bDelta^{(t)}_m) \to \delta(\bx_{t-1} - \bz_{m}),
\end{equation*}
when $\bDelta_m^{(t)} \to \bzero$. Hence, we may substitute in these expressions $\bx_{t-1} = \bz_{m} = \bmu_{t-1}^{(m)}$ to obtain $\bmu_{m}^- = \bff(\bmu_{t-1}^{(m)})$ and $\bSigma_{m}^- = \bQ$ for all $m$.

\noindent \textbf{Update:} In the update step of the AGSF, we have (dropping indices $n, \ell$),  $\bs_m \sim \mathcal{N}(\bff(\bmu_{t-1}^{(m)}), \bQ)$, which is the pdf of the dynamical model $p(\cdot|\bx_{t-1} = \bmu_{t-1}^{(m)})$.

Moreover, in Eq. (68) the expectation is taken with respect to $\delta(\bx_t-\bs_m)$, since $\bLambda_m = \bzero$, hence $\bC_m=\bzero$ and the gain $\bG_m$ is also zero.
Hence, the update Eqs. (64)-(65) become, $\bmu_t^{(m)} = \bs_m \sim \mathcal{N}(\bff(\bmu_{t-1}^{(m)}), \bQ)$ and $\bSigma_t^{(m)} = \bzero$.
By identifying the means $\bmu_t^{(m)}$ with the particles of the BPF, we have ${\bmu_t^{(m)} \sim p(\cdot|\bmu_{t-1}^{(m)})}$ which is the sampling step of the BPF. Moreover, the weights of the particles are given by the likelihood $w_m \propto p(\by_t |\bmu_{t}^{(m)})$, just as in the BPF, as can be verified by Eqs.  (63), (66)-(67).
Finally, we have a resampling step with probabilities being proportional to the weights. Hence, we have reproduced a complete BPF recursion. {Note that the use of resampling is instrumental for the AGSF to capture the BPF as a special case.}

\noindent\textit{(ii)} For the $m$-th component of the AGSF, the choice ${\bDelta_m^{(t)} = \bSigma^{(m)}_{t-1}}$ implies that all sampled particles
\begin{equation*}
    \bz_{mn} \sim \mathcal{N}(\bmu_{t-1}^{(m)}, \bSigma_{t-1}^{(m)}-\bDelta_m^{(t)})
\end{equation*}
are identical to the mean $\bmu_{t-1}^{(m)}$. Hence, we can set $N_m=1$ for each $m$ without losing generality. Moreover from Eqs. (60)-(61) we can see that the expectations are taken with respect to the distribution
$\mathcal{N}(\bx_{t-1}|\bmu_{t-1}^{(m)}, \bSigma_{t-1}^{(m)})$, which are the original component distributions of the GSF.

Similarly for the update step, the sampled particles all coincide with the means of the predicted distributions $\bmu_{t, mn}^-$. Moreover once again, it is easy to see that the updated means and covariances are identical to those of a GSF.

\section*{Appendix C}\label{Appendix C}
\section*{Conditional of a Gaussian mixture}
Let the joint distribution of a pair of random vectors $\bx$ and $\by$ be given by the Gaussian mixture,
\begin{equation}\label{joint GM general}
p(\bx, \by) = \sum_{m=1}^{M} w_m \mathcal{N}_m(\bx,\by),
\end{equation}
where,
\begin{equation}
    \mathcal{N}_m(\bx,\by) = \mathcal{N}\Bigg[ \begin{pmatrix} \bx \\   \by \end{pmatrix} \Bigg| \begin{pmatrix} \bmu_{\bx, m} \\   \bmu_{\by, m} \end{pmatrix}, \begin{pmatrix} \bSigma_{\bx,m} &   \bC_{m} \\ \bC_{m}^T &   \bSigma_{\by,m} \end{pmatrix} \Bigg],
\end{equation}
are the components of the mixture. Let us denote by $\mathcal{N}_m(\bx), \mathcal{N}_m(\by)$ and $\mathcal{N}_m(\bx|\by)$, the marginals of each component and the conditional of $\bx$ given $\by$, given by respectively,
\begin{align}
    \mathcal{N}_m(\bx) &= \mathcal{N}(\bx | \bmu_{\bx,m}, \bSigma_{\bx, m}), \\
    \mathcal{N}_m(\by) &= \mathcal{N}(\by | \bmu_{\by,m}, \bSigma_{\by, m}), \\
    \mathcal{N}_m(\bx|\by) &= \mathcal{N}(\bx | \bmu_{\bx|\by,m}, \bSigma_{\bx|\by,m}),
\end{align}
where,
\begin{align}
    \bmu_{\bx|\by,m} &= \bmu_{\bx,m} + \bC_m\bSigma_{\by,m}^{-1}(\by - \bmu_{\by,m}), \\
    \bSigma_{\bx|\by,m} &= \bSigma_{\bx,m} - \bC_m \bSigma_{\by,m}^{-1} \bC_m^T,
\end{align}
see Lemma A.3 in Ref. [4]. It is straightforward to see that the marginals of the mixture are given by,
\begin{align}
    p(\bx) &= \sum_{m=1}^{M} w_m \mathcal{N}_m(\bx), \\
    p(\by) &= \sum_{m=1}^{M} w_m \mathcal{N}_m(\bx).
\end{align}
For the conditional of $\bx$ given $\by$ we have,
\begin{align}
    p(\bx|\by) &= \frac{p(\bx,\by)}{p(\by)} \\
    &= \frac{\sum_{m=1}^{M} w_m \mathcal{N}_m(\bx,\by)}{\sum_{m=1}^{M} w_m \mathcal{N}_m(\by)} \\
    &= \frac{\sum_{m=1}^{M} w_m \mathcal{N}_m(\by)\mathcal{N}_m(\bx|\by)}{\sum_{m=1}^{M} w_m \mathcal{N}_m(\by)}  \\ \label{generic mixture conditonal}
    &= \sum_{m=1}^{M} \tilde w_m \mathcal{N}_m(\bx|\by)
\end{align}
where,
\begin{equation}
    \tilde w_m = \frac{w_m \mathcal{N}_m(\by)}{\sum_{m=1}^{M} w_m \mathcal{N}_m(\by)}.
\end{equation}

\begin{center}
\begin{minipage}{\linewidth}
\begin{algorithm}[H]
\begin{algorithmic}[1]
\STATE \textbf{Parameters:} $N_m$, $\bDelta^{(t)}_{m}$, and $L_{mn}$, $\bLambda^{(t)}_{mn}$ \ for $m=1,\dots,M$ $n=1,\dots,N_m$ and $t=1,\dots,T$ \\
\STATE \textbf{Initialization:}
$\{w^{(m)}_{0}, \bmu^{(m)}_{0}, \bSigma^{(m)}_{0}\}_{m=1}^M$ \\
\STATE for $t=1,\dots,T$:
\STATE \textbf{ Prediction:} \begin{align*}
       \widehat p(\bx_{t}|\by_{1:t-1}) = \sum_{m=1}^M \sum_{n=1}^{N_m} w_{mn}  \mathcal{N}(\bx_{t}| \bmu_{t, mn}^-, \bSigma_{t, mn}^-),
\end{align*}
where
\begin{align*}
\bmu_{t, mn}^- &= \bff(\bz_{mn}),\\
\bSigma_{t, mn}^- &= \nabla \bff(\bz_{mn}) \bDelta^{(t)}_m \nabla \bff(\bz_{mn})^T + \bQ,
\end{align*}
for $\bz_{mn} \sim \mathcal{N}(\bz|\bmu_{t-1}^{(m)}, \bSigma_{t-1}^{(m)}-\bDelta^{(t)}_m)$, $w_{mn} = w_{t-1}^{(m)}/N_m$, $n=1,\dots,N_m$; $m=1,\dots,M$.
\STATE \textbf{ Update:}
\begin{align*}
    \widehat p(\bx_t|\by_{1:t}) &= \sum_{{m}=1}^M \sum_{n=1}^{N_m} \sum_{\ell=1}^{L_{mn}} w_{mnl} \mathcal{N}(\bx_t|\bmu_{{mn}\ell}, \bSigma_{{mn}\ell}),
\end{align*}
where
\begin{align*}
    w_{mnl} &\propto (w_{mn} / L_{mn})  \mathcal{N}(\by_t|\bmu_{\by, {mn}\ell}, \bS_{\by, {mn}\ell}), \\
    \bmu_{{mn}\ell} &= \bs_{{mn}\ell} + \bG_{{mn}\ell} (\by_t - \bmu_{\by, {mn}\ell}), \\
    \bSigma_{{mn}\ell} &= \bLambda^{(t)}_{mn} - \bG_{{mn}\ell} \bS_{\by, {mn}\ell} \bG_{{mn}\ell}^T,
\end{align*}
with
\begin{align*}
\bmu_{\by, {mn}\ell} &=  \bg(\bs_{{mn}\ell}), \\
\bS_{\by, {mn}\ell} &= \nabla \bg(\bs_{{mn}\ell})  \bLambda^{(t)}_{mn} \nabla \bg(\bs_{{mn}\ell})^T + \bR, \\
\bG_{{mn}\ell} &= \bLambda^{(t)}_{mn} \nabla \bg(\bs_{{mn}\ell})^T \bS_{\by, {mn}\ell}^{-1},
\end{align*}
\noindent where $\bs_{{mn}\ell} \sim \mathcal{N}(\bs | \bmu_{t, mn}^-, \bSigma_{t, mn}^- - \bLambda^{(t)}_{mn})$ for $m=1,\dots,M, n=1,\dots,N_m$ and $\ell = 1,\dots, L_{mn}$.
\STATE \textbf{ Resampling:} For $m'=1,\dots,M$, sample a triplet $(m n\ell)$ with probability $w_{mn\ell}$ and set
\begin{equation*}
    \bmu^{(m')}_t = \bmu_{m n\ell} \ ; \ \bSigma_t^{(m')} = \bSigma_{m n\ell}.
\end{equation*}
Set $w_t^{(m)} = 1 / M$.
\end{algorithmic}
\caption{Linear AGSF}
\label{L-AGSF}
\end{algorithm}
\end{minipage}
\end{center}

\begin{center}
\scalebox{0.8}{
\begin{minipage}{\linewidth}
\begin{algorithm}[H]
\begin{algorithmic}[1]
\STATE \textbf{Parameters:} $N_m$, $\bDelta^{(t)}_{m}$, and $L_{mn}$, $\bLambda^{(t)}_{mn}$ \ for $m=1,\dots,M$ $n=1,\dots,N_m$ and $t=1,\dots,T$ \\
\STATE \textbf{Initialization:}
$\{w^{(m)}_{0}, \bmu^{(m)}_{0}, \bSigma^{(m)}_{0}\}_{m=1}^M$ \\
\STATE for $t=1,\dots,T$:
\STATE \textbf{ Prediction:} \begin{align*}
       \widehat p(\bx_{t}|\by_{1:t-1}) = \sum_{m=1}^M \sum_{n=1}^{N_m} w_{mn}  \mathcal{N}(\bx_{t}| \bmu_{t, mn}^-, \bSigma_{t, mn}^-),
\end{align*}
where
\begin{align*}
\bmu_{t, mn}^- &= \sum_{i=-d_x}^{d_x} \omega_i \bff(\bsigma_{mn}^{(i)}),\\
\bSigma_{t, mn}^- &= \sum_{i=-d_x}^{d_x} \omega_i (\bff(\bsigma_{mn}^{(i)}) - \bmu_{t, mn}^-)(\bff(\bsigma_{mn}^{(i)}) - \bmu_{t, mn}^-)^T + \bQ,
\end{align*}
and
\begin{align*}
\bsigma_{mn}^{(0)} &= \bz_{mn}, \\
\bsigma_{mn}^{(\pm i)} &= \bz_{mn} \pm \sqrt{d_x+\lambda} \cdot [\bDelta_m^{(t)}]^{1/2}_{\bullet i}, \ i=1,\dots,d_x,
\end{align*}
for $\bz_{mn} \sim \mathcal{N}(\bz|\bmu_{t-1}^{(m)}, \bSigma_{t-1}^{(m)}-\bDelta^{(t)}_m)$, $w_{mn} = w_{t-1}^{(m)}/N_m$, $n=1,\dots,N_m$; $m=1,\dots,M$.
\end{algorithmic}
\caption{Unscented AGSF prediction}
\label{U-AGSF-pred}
\end{algorithm}
\end{minipage}%
}
\end{center}


\begin{table*}
\centering
\caption{{\textbf{{Experiment A}.} Comparison between the MSE and LPE for various settings of the L-GSF, U-GSF, and BPF with the proposed L-AGSF and U-AGSF algorithms. We compare the error metrics for $a=0.05$ and three observation noise values. Lower values are better.}}
\label{table::Exp A.2}
\resizebox{\textwidth}{!}{
\begin{tabular}{llcccccc}
\toprule
& & \multicolumn{2}{c}{$a=0.05, \sigma^2=25\times 10^{-1}$} & \multicolumn{2}{c}{$a=0.05, \sigma^2=25\times 10^{-3}$} & \multicolumn{2}{c}{$a=0.05, \sigma^2=25\times 10^{-6}$} \\
\cmidrule(lr){3-4} \cmidrule(lr){5-6} \cmidrule(lr){7-8}
& \textsc{parameters} & MSE & LPE & MSE & LPE & MSE & LPE \\
\midrule
\multirow{3}{*}{\rotatebox[origin=c]{90}{L-GSF}} & $M=1 (EKF)$ & $178.20\pm46.16$ & $3.38\times 10^{3}\pm896.60$ & $2.20\pm1.90$ & $669.81\pm610.31$ & $0.03\pm0.02$ & $465.31\pm215.72$ \\
  & $M=100$ & $52.11\pm9.41$ & $25.12\pm3.75$ & $0.35\pm0.17$ & $15.34\pm8.24$ & $9.27\pm9.12$ & $4.42\times 10^{5}\pm4.69\times 10^{5}$ \\
  & $M=1000$ & $66.80\pm20.83$ & $4.73\pm1.10$ (95.0\%) & $0.30\pm0.18$ & $14.20\pm15.84$ & $1.21\times 10^{-3}\pm7.16\times 10^{-4}$ & $89.39\pm96.67$ \\
\midrule
\multirow{3}{*}{\rotatebox[origin=c]{90}{U-GSF}} & $M=1 (UKF)$ & $87.79\pm48.81$ & $1.80\times 10^{4}\pm1.60\times 10^{4}$ & $2.95\pm2.45$ & $1.00\times 10^{5}\pm9.38\times 10^{4}$ & $0.07\pm0.06$ & $567.01\pm409.18$ \\
  & $M=100$ & $6.21\pm4.63$ & $7.26\pm7.20$ & $0.52\pm0.33$ & $-6.95\pm0.62$ & $1.28\times 10^{-3}\pm4.92\times 10^{-4}$ & $-15.07\pm0.19$ \\
  & $M=1000$ & $44.40\pm20.59$ & $-1.59\pm0.57$ & $0.16\pm0.05$ & $-7.36\pm0.45$ & $1.24\times 10^{-3}\pm4.46\times 10^{-4}$ & $-15.37\pm0.14$ \\
\midrule
\multirow{4}{*}{\rotatebox[origin=c]{90}{BPF}} & $M=100$ & $42.78\pm21.52$ & $1.61\times 10^{6}\pm8.85\times 10^{5}$ & $57.56\pm35.13$ & $2.02\times 10^{7}\pm1.36\times 10^{7}$ & $124.04\pm47.64$ & $5.98\times 10^{7}\pm2.70\times 10^{7}$ \\
& $M=1000$ & $2.21\pm0.76$ & $1.99\times 10^{3}\pm1.29\times 10^{3}$ & $97.77\pm59.90$ & $4.20\times 10^{7}\pm2.61\times 10^{7}$ & $151.37\pm45.90$ & $8.16\times 10^{7}\pm2.66\times 10^{7}$ \\
 & $M=10000$ & $1.34\pm0.36$ & $1.29\times 10^{3}\pm1.24\times 10^{3}$ & $1.74\pm1.08$ & $4.53\times 10^{5}\pm3.82\times 10^{5}$ & $159.01\pm47.90$ & $7.47\times 10^{7}\pm2.10\times 10^{7}$ \\
  & $M=100000$ & $1.92\pm0.80$ & $90.31\pm27.90$ & $0.02\pm3.01\times 10^{-3}$ & $7.95\pm8.45$ & $98.94\pm32.24$ & $4.71\times 10^{7}\pm1.36\times 10^{7}$ \\
\midrule
\multirow{4}{*}{\rotatebox[origin=c]{90}{APF}} & $M=100$ & $45.40\pm13.02$ & $2.21\times 10^{6}\pm5.56\times 10^{5}$ & $70.76\pm32.19$ & $2.96\times 10^{7}\pm1.44\times 10^{7}$ & $164.74\pm30.83$ & $8.37\times 10^{7}\pm1.52\times 10^{7}$ \\
  & $M=1000$ & $115.32\pm92.06$ & $9.59\times 10^{6}\pm8.30\times 10^{6}$ & $18.87\pm17.53$ & $8.94\times 10^{6}\pm8.26\times 10^{6}$ & $159.32\pm34.55$ & $7.65\times 10^{7}\pm1.70\times 10^{7}$ \\
  & $M=10000$ & $1.13\pm0.50$ & $5.80\times 10^{3}\pm2.93\times 10^{3}$ & $7.49\pm6.74$ & $3.34\times 10^{6}\pm3.18\times 10^{6}$ & $24.77\pm12.20$ & $1.30\times 10^{7}\pm5.86\times 10^{6}$ \\
  & $M=100000$ & $0.78\pm0.24$ & $183.63\pm44.74$ & $0.02\pm3.31\times 10^{-3}$ & $11.56\pm11.28$ & $68.97\pm43.13$ & $3.61\times 10^{7}\pm2.35\times 10^{7}$ \\
\midrule
\multirow{3}{*}{\rotatebox[origin=c]{90}{L-AGSF}} & $M=2, N=5, L=5$ & $94.86\pm28.93$ & $86.91\pm36.54$ & $6.01\pm5.06$ & $58.43\pm65.61$ & $7.97\times 10^{-3}\pm4.17\times 10^{-4}$ & $-4.17\pm0.02$ \\
  & $M=10, N=5, L=5$ & $21.45\pm6.45$ & $13.16\pm6.14$ & $0.56\pm0.08$ & $-1.28\pm0.08$ & $3.60\times 10^{-3}\pm4.32\times 10^{-4}$ & $-4.19\pm0.02$ \\
  & $M=100, N=5, L=5$ & $9.88\pm1.57$ & $3.07\pm0.07$ & $0.50\pm0.07$ & $-1.37\pm0.10$ & $2.93\times 10^{-3}\pm3.77\times 10^{-4}$ & $-4.16\pm0.02$ \\
\midrule
\multirow{3}{*}{\rotatebox[origin=c]{90}{U-AGSF}} & $M=2, N=5, L=5$ & $54.71\pm15.30$ & $86.04\pm30.01$ & $16.88\pm16.00$ & $404.43\pm349.29$ & $9.21\times 10^{-3}\pm1.34\times 10^{-3}$ & $-7.49\pm0.17$ \\
  & $M=10, N=5, L=5$ & $33.79\pm8.33$ & $36.29\pm11.35$ & $0.38\pm0.07$ & $-2.12\pm0.09$ & $5.03\times 10^{-3}\pm4.49\times 10^{-4}$ & $-7.56\pm0.08$ \\
  & $M=100, N=5, L=5$ & $8.49\pm1.72$ & $2.52\pm0.21$ & $0.33\pm0.04$ & $-2.16\pm0.07$ & $2.89\times 10^{-3}\pm2.98\times 10^{-4}$ & $-7.62\pm0.10$ \\
\bottomrule
\end{tabular}
}
\end{table*}

\begin{center}
\scalebox{0.8}{
\begin{minipage}{\linewidth}
\begin{algorithm}[H]
\begin{algorithmic}[1]
\STATE \textbf{ Update:}
\begin{align*}
    \widehat p(\bx_t|\by_{1:t}) &= \sum_{{m}=1}^M \sum_{n=1}^{N_m} \sum_{\ell=1}^{L_{mn}} w_{mnl} \mathcal{N}(\bx_t|\bmu_{{mn}\ell}, \bSigma_{{mn}\ell}),
\end{align*}
where
\begin{align*}
    w_{mnl} &\propto (w_{mn} / L_{mn})  \mathcal{N}(\by_t|\bmu_{\by, {mn}\ell}, \bS_{\by, {mn}\ell}), \\
    \bmu_{{mn}\ell} &= \bs_{{mn}\ell} + \bG_{{mn}\ell} (\by_t - \bmu_{\by, {mn}\ell}), \\
    \bSigma_{{mn}\ell} &= \bLambda^{(t)}_{mn} - \bG_{{mn}\ell} \bS_{\by, {mn}\ell} \bG_{{mn}\ell}^T,
\end{align*}
with
\begin{align*}
\bmu_{\by, {mn}\ell} &= \sum_{i=-d_x}^{d_x} \omega_i \bg(\bsigma_{mn\ell}^{(i)}), \\
\bS_{\by, {mn}\ell} &= \sum_{i=-d_x}^{d_x} \omega_i (\bg(\bsigma_{mn\ell}^{(i)}) - \bmu_{\by, {mn}\ell})(\bg(\bsigma_{mn\ell}^{(i)}) - \bmu_{\by, {mn}\ell})^T + \bR, \\
\bC_{mn\ell} &= \sum_{i=-d_x}^{d_x} \tilde \omega_i (\bsigma_{mn\ell}^{(i)}-\bs_{{mn}\ell})(\bg(\bsigma_{mn\ell}^{(i)})- \bmu_{\by, {mn}\ell})^T, \\
\bG_{{mn}\ell} &= \bC_{mn\ell} \bS_{\by, {mn}\ell}^{-1},
\end{align*}
where,
\begin{align*}
\bsigma_{mn\ell}^{(0)} &= \bs_{mn\ell}, \\
\bsigma_{mn\ell}^{(\pm i)} &= \bs_{mn\ell} \pm \sqrt{d_x+\lambda} \cdot [\bLambda_{mn}^{(t)}]^{1/2}_{\bullet i}, \ i=1,\dots,d_x,
\end{align*}
and $\bs_{{mn}\ell} \sim \mathcal{N}(\bs | \bmu_{t, mn}^-, \bSigma_{t, mn}^- - \bLambda^{(t)}_{mn})$ for $m=1,\dots,M, n=1,\dots,N_m$ and $\ell = 1,\dots, L_{mn}$.
\STATE \textbf{ Resampling:} For $m'=1,\dots,M$, sample a triplet $(m n\ell)$ with probability $w_{mn\ell}$ and set
\begin{equation*}
    \bmu^{(m')}_t = \bmu_{m n\ell} \ ; \ \bSigma_t^{(m')} = \bSigma_{m n\ell}.
\end{equation*}
Set $w_t^{(m)} = 1 / M$.
\end{algorithmic}
\caption{Unscented AGSF update and resampling}
\label{U-AGSF-upd}
\end{algorithm}
\end{minipage}%
}
\end{center}

\begin{figure*}
    \centering
    \includegraphics[width=0.5\linewidth]{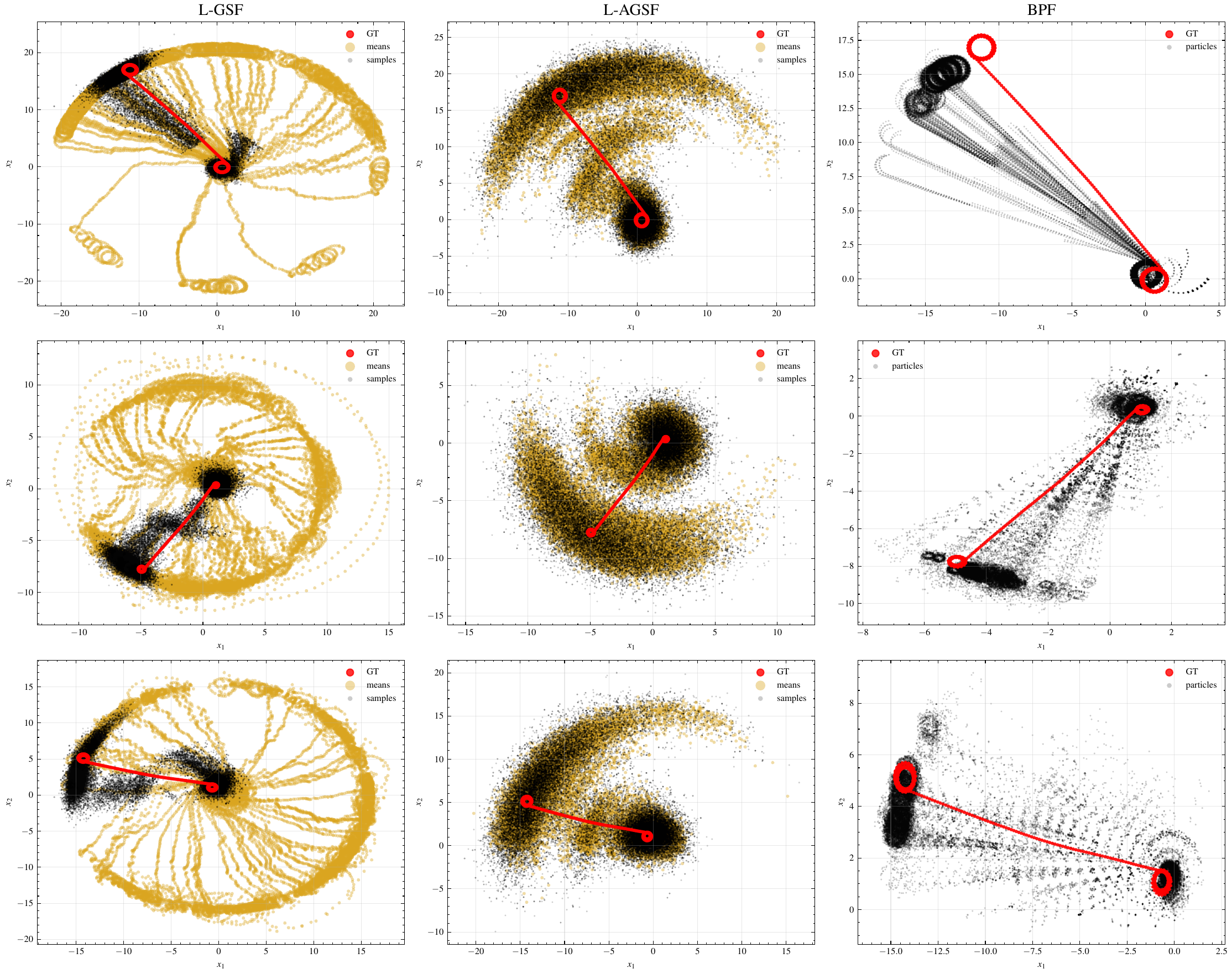}
    \caption{Plots of samples from the posteriors of different algorithms for the target tracking problem with $a=0.05$ and $\sigma^2 = 25\times 10^{-1}$. In each row we plot a different realization of the state trajectory, shared among all algorithms, with increasing number of components/particles ($M=100$, $M=500$, and $M=1000$ for the each row respectively). For GSF and AGSF algorithms we also plot component means. Typical posterior behaviors of the L‑GSF, L‑AGSF, and BPF are illustrated in this example. The true object trajectory is shown in red.}
    \label{fig:scatter_plot_0.05_2.5}
\end{figure*}

\begin{figure*}
    \centering
    \includegraphics[width=.5\linewidth]{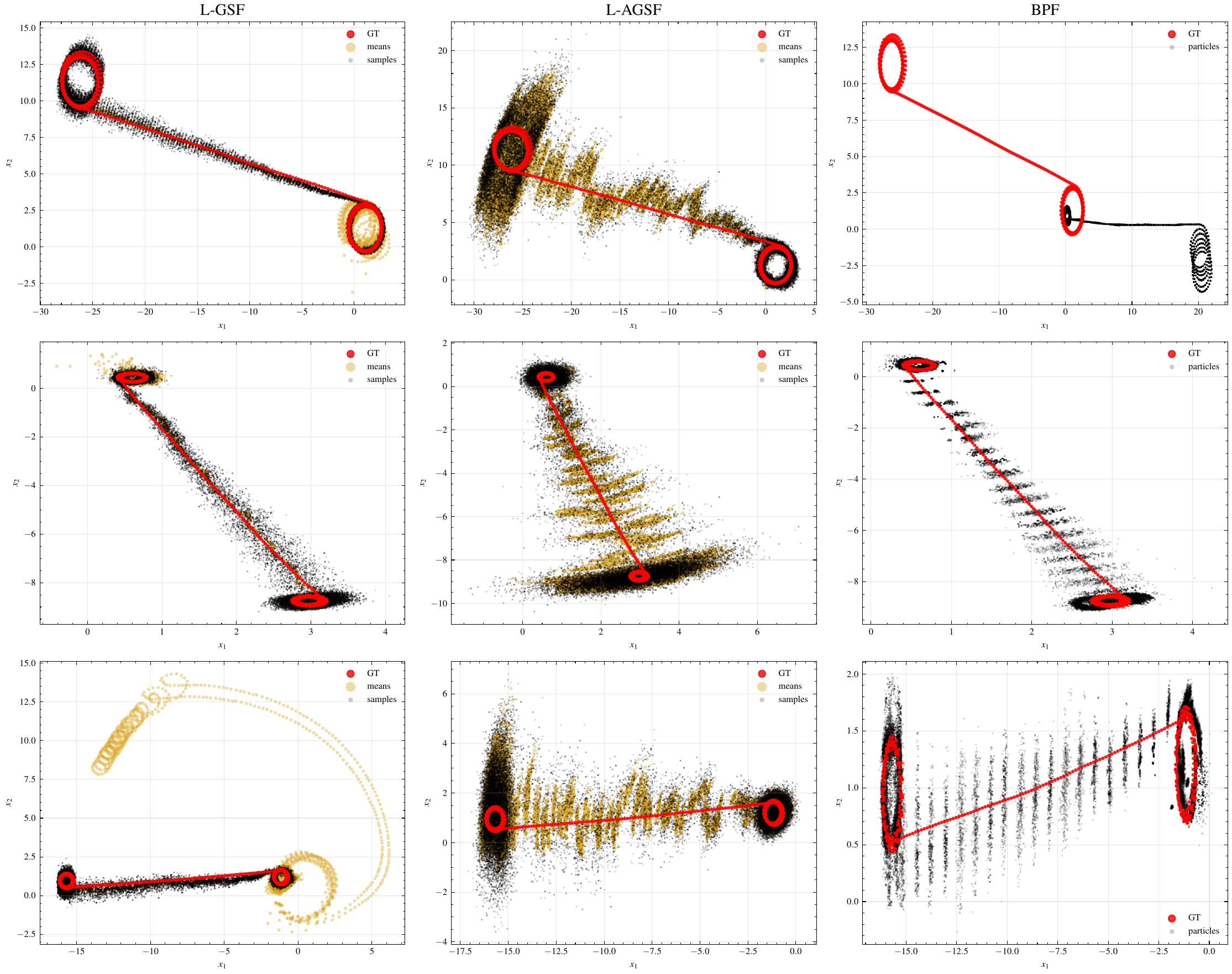}
    \caption{Plots of samples from the posteriors of different algorithms for the target tracking problem with $a=0.05$ and $\sigma^2 = 25\times 10^{-3}$. In each row we plot a different realization of the state trajectory, shared among all algorithms, with increasing number of components/particles ($M=100$, $M=500$, and $M=1000$ for the each row respectively). For GSF and AGSF algorithms we also plot component means. Typical posterior behaviors of the L‑GSF, L‑AGSF, and BPF are illustrated in this example. The true object trajectory is shown in red.}
    \label{fig:scatter_plot_0.05_0.025}
\end{figure*}

\begin{figure*}
    \centering
    \includegraphics[width=0.5\linewidth]{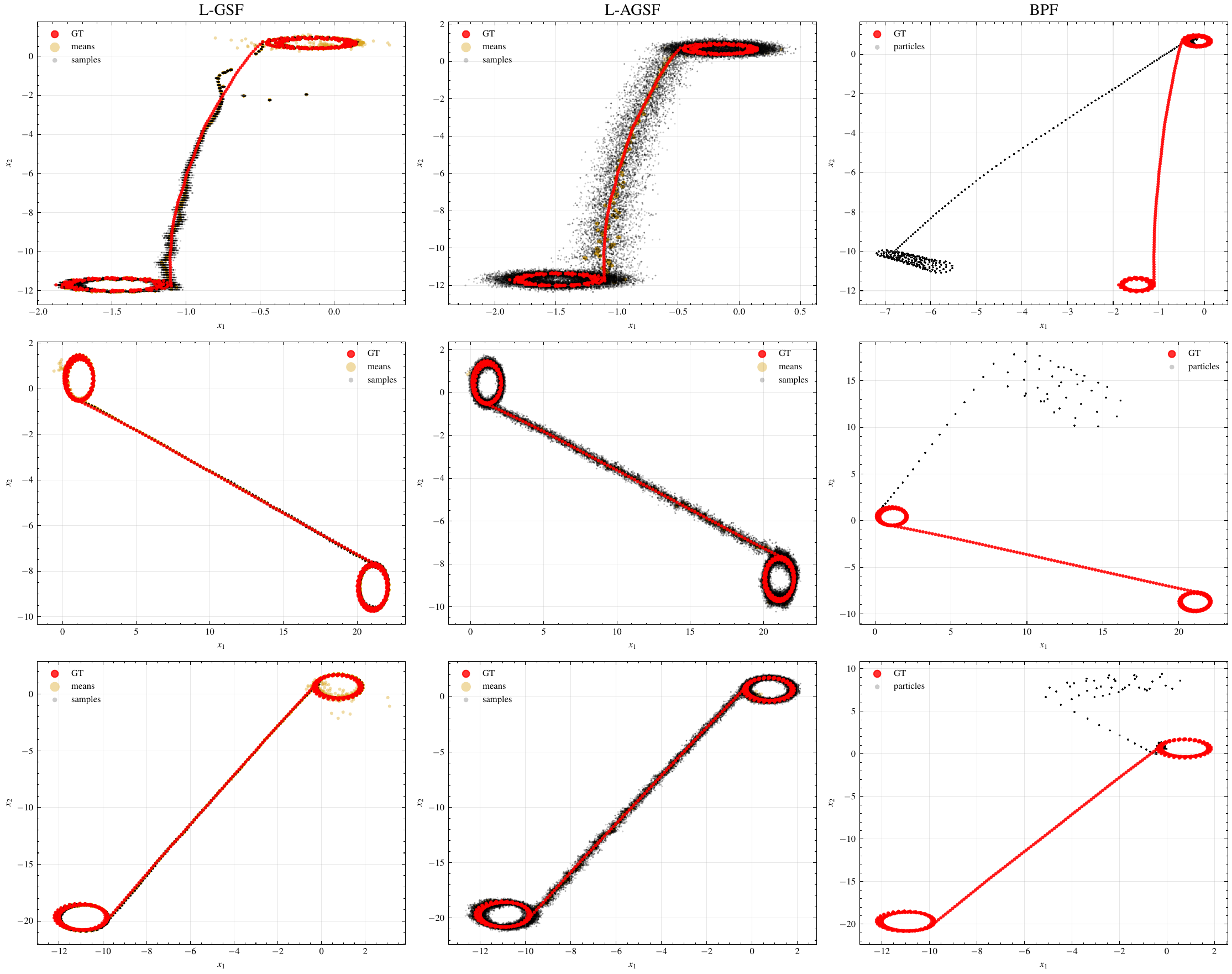}
    \caption{Plots of samples from the posteriors of different algorithms for the target tracking problem with $a=0.05$ and $\sigma^2 = 25\times 10^{-6}$. In each row we plot a different realization of the state trajectory, shared among all algorithms, with increasing number of components/particles ($M=100$, $M=500$, and $M=1000$ for the each row respectively). For GSF and AGSF algorithms we also plot component means. Typical posterior behaviors of the L‑GSF, L‑AGSF, and BPF are illustrated in this example. The true object trajectory is shown in red.}
    \label{fig:scatter_plot_0.05_0.000025}
\end{figure*}

\begin{figure*}
    \centering
    \includegraphics[width=0.5\linewidth]{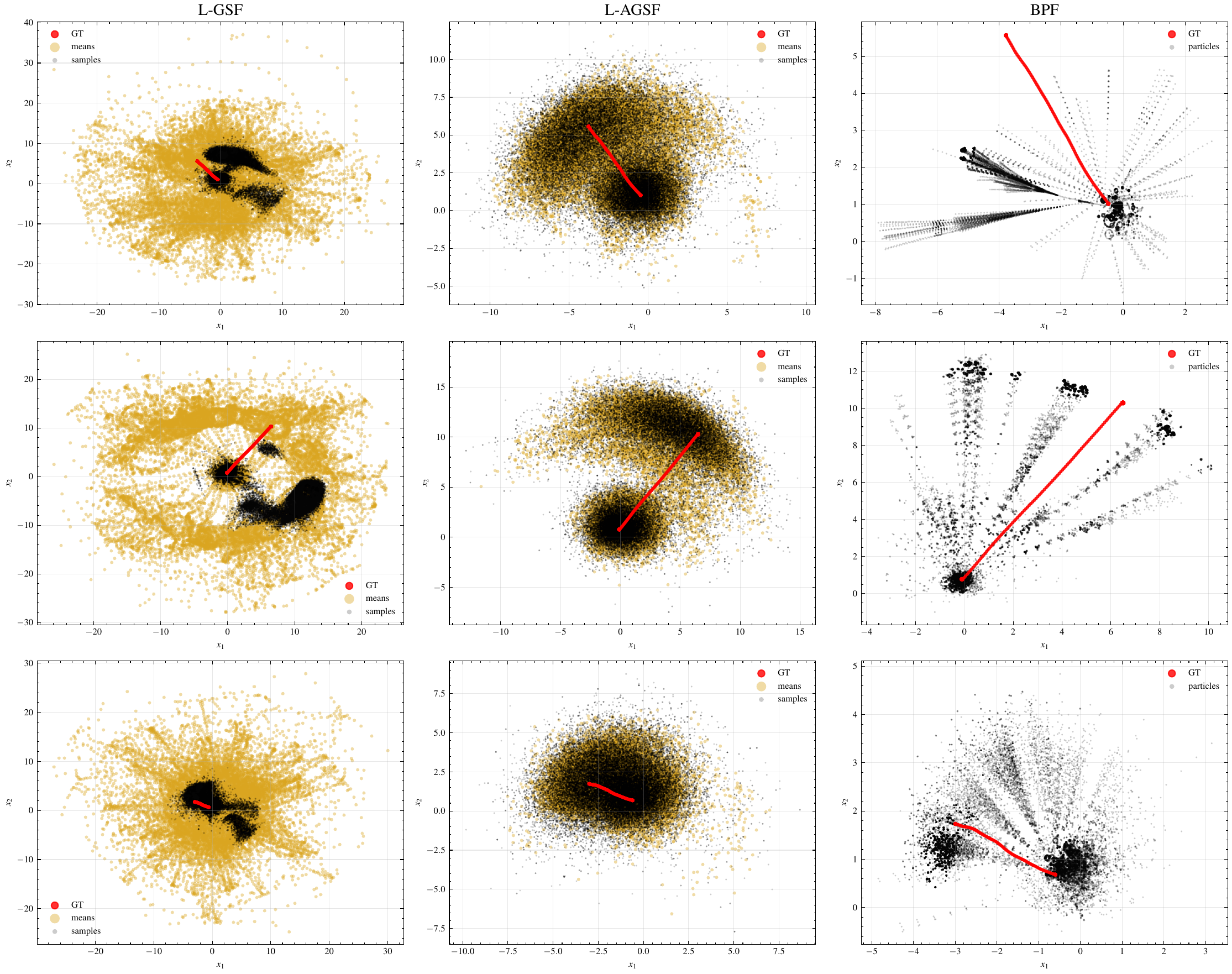}
    \caption{Plots of samples from the posteriors of different algorithms for the target tracking problem with $a=0.5$ and $\sigma^2 = 25\times 10^{-1}$. In each row we plot a different realization of the state trajectory, shared among all algorithms, with increasing number of components/particles ($M=100$, $M=500$, and $M=1000$ for the each row respectively). For GSF and AGSF algorithms we also plot component means. Typical posterior behaviors of the L‑GSF, L‑AGSF, and BPF are illustrated in this example. The true object trajectory is shown in red.}
    \label{fig:scatter_plot_0.5_2.5}
\end{figure*}

\begin{figure*}
    \centering
    \includegraphics[width=0.5\linewidth]{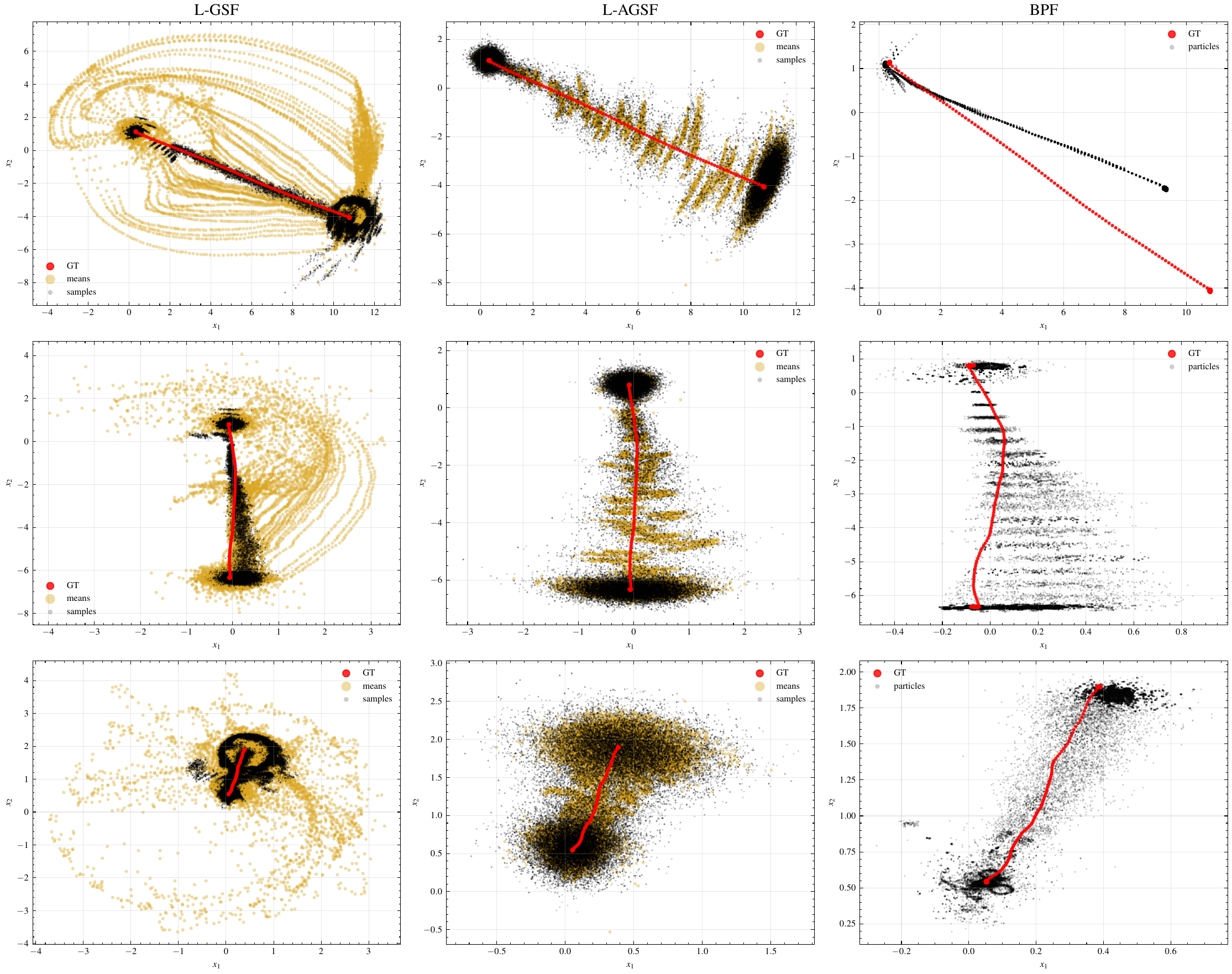}
    \caption{Plots of samples from the posteriors of different algorithms for the target tracking problem with $a=0.5$ and $\sigma^2 = 25\times 10^{-3}$. In each row we plot a different realization of the state trajectory, shared among all algorithms, with increasing number of components/particles ($M=100$, $M=500$, and $M=1000$ for the each row respectively). For GSF and AGSF algorithms we also plot component means. Typical posterior behaviors of the L‑GSF, L‑AGSF, and BPF are illustrated in this example. The true object trajectory is shown in red.}
    \label{fig:scatter_plot_0.5_0.025}
\end{figure*}

\begin{figure*}
    \centering
    \includegraphics[width=0.5\linewidth]{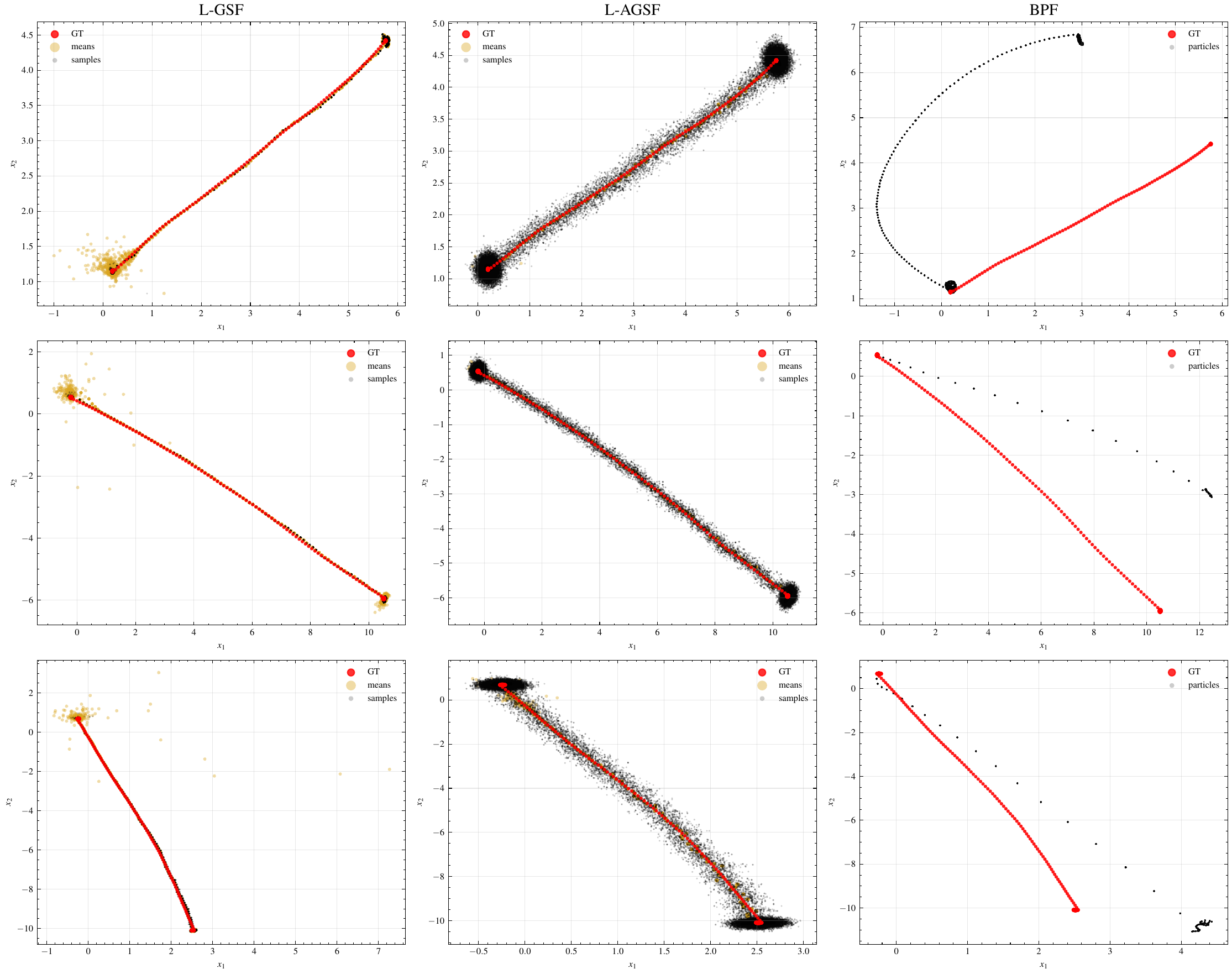}
    \caption{Plots of samples from the posteriors of different algorithms for the target tracking problem with $a=0.5$ and $\sigma^2 = 25\times 10^{-6}$. In each row we plot a different realization of the state trajectory, shared among all algorithms, with increasing number of components/particles ($M=100$, $M=500$, and $M=1000$ for the each row respectively). For GSF and AGSF algorithms we also plot component means. Typical posterior behaviors of the L‑GSF, L‑AGSF, and BPF are illustrated in this example. The true object trajectory is shown in red.}
    \label{fig:scatter_plot_0.5_0.000025}
\end{figure*}

\end{document}